\newtheorem{theorem}{Theorem}[section]
\newtheorem{proposition}[theorem]{Proposition}
\newtheorem{lemma}[theorem]{Lemma}
\newtheorem{corollary}[theorem]{Corollary}
\theoremstyle{definition}
\newtheorem{definition}[theorem]{Definition}
\def\epsilon{\varepsilon}
\title{On The Structure of EFX Orientations on Graphs}
\author{Jinghan A Zeng}
\address{University of Illinois Urbana-Champaign}
\email{jazeng2@illinois.edu}
\author{Ruta Mehta}
\address{University of Illinois Urbana-Champaign}
\email{rutameht@illinois.edu}
\begin{document}

\begin{abstract}
Fair division is the problem of allocating a set of items among agents in a fair manner. One of the most sought-after fairness notions is envy-freeness (EF), requiring that no agent envies another's allocation. When items are indivisible, it ceases to exist, and envy-freeness up to any good (EFX) emerged as one of its strongest relaxations. The existence of EFX allocations is arguably the biggest open question within fair division. Recently, Christodoulou, Fiat, Koutsoupias, and Sgouritsa (EC 2023) showed that EFX allocations exist for the case of graphical valuations where an instance is represented by a graph: nodes are agents, edges are goods, and each agent values only her incident edges. On the other hand, they showed NP-hardness for checking the existence of EFX orientation where every edge is allocated to one of its incident vertices, and asked for a characterization of graphs that exhibit EFX orientation regardless of the assigned valuations. 

In this paper, we make significant progress toward answering their question. We introduce the notion of {\em strongly EFX orientable graphs} -- graphs that have EFX orientations regardless of how much agents value the edges. We show a surprising connection between this property and the chromatic number $\chi(G)$ of the graph $G$. In particular, we show that graphs with $\chi(G)\le 2$ are strongly EFX orientable, and those with $\chi(G)>3$ are not strongly EFX orientable. We provide examples of strongly EFX orientable and non-strongly EFX orientable graphs of $\chi(G)=3$ to prove tightness. Finally, we give a complete characterization of strong EFX orientability when restricted to binary valuations. 

\end{abstract}
\maketitle
\newpage

\section{Introduction}
Fair division is a problem of allocating a set $M$ of goods among $n$ agents in a {\em fair} manner \cite{Steinhaus48}. This is an age-old problem with numerous contemporary applications, e.g., division of family inheritance \cite{PrattZ90}, divorce settlements \cite{BramsT96}, spectrum allocation \cite{EtkinPT05}, air traffic management \cite{Vossen02}, course allocation \cite{BudishC10} and many more \footnote{See \url{www.spliddit.org} and \url{www.fairoutcomes.com} for a detailed discussion on fair division protocols used in day-to-day life.}. In this paper, we study the fair division of indivisible goods. In this case, the preferences of each agent $i \in [n]=\{1,2...n \}$ are represented by a monotone increasing valuation function $f_i:2^{M} \to \mathbb{R}_{\geq 0}$, specifying how much they value each subset of goods \footnote{Using $v_i$ to denote the valuation function is the convention in the literature, however, we will use $f_i$ to avoid confusing notation, as $v$ is also used to denote a vertex in a graph.}. An allocation of goods is a partition of $M$ into $n$ subsets $X=(X_1,X_2...X_n)$, where $X_i$ is allocated to agent $i$.

One of the most sought-after fairness notions is of {\em envy freeness}. An allocation is said to be {\em envy free (EF)} if no agent $i$ envies another's allocation: $f_i(X_i) \ge f_i(X_j), \forall i,j\in[n]$. With indivisible items, an envy-free allocation may not always exist, for example, allocating one iPhone among two agents who both value it highly. Hence, several relaxations of envy-freeness have been studied, arguably the strongest of which is \emph{envy-freeness up to any good (EFX)} \cite{Caragiannis19}. An allocation is said to be EFX if for any two agents $i,j$, $i$ does not envy any proper subset of $j$'s bundle. That is, for any $X \subset X_j$, $f_i(X_i) \geq f_i(X)$. In other words, $i$ does not envy $X_j$ after the removal of any good from $X_j$. 

Determining whether an EFX allocation exists is arguably the biggest open question within fair division \cite{ProcaciaCACM}. 
For the case of additive valuations, the existence of EFX is known for three agents~\cite{Chaudhury2020}, and beyond additive only for the case of two agents~\cite{PR18}. Given the notoriety of this problem, there has been extensive work exploring special cases and relaxations (see Section \ref{sec:RW} for an overview of related works). One such prominent special case is of {\em graphical valuations} considered by Christodoulou, Fiat, Koutsoupias, and Sgouritsa \cite{Christodoulou2023}. Here, an instance is represented by an undirected graph $G=(V,E)$, where the vertices correspond to agents and the edges correspond to goods, such that each edge is valued positively only by its incident agents. In other words, each vertex $v \in V$ has a valuation function $f_v: 2^{E} \to \mathbb{R}_{\geq 0}$, satisfying the property that $f_v(X) = f_v(X \cap E(v))$, where $E(v)$ is the set of edges incident to $v$, for all $X \subseteq E$. 

An {\em orientation} of edges in $G$ can be interpreted as an allocation where each edge/good is given to its incident vertex/agent the edge is oriented towards. It is called an {\em EFX orientation} if the corresponding allocation is EFX. \cite{Christodoulou2023} showed that an EFX orientation may not always exist and, furthermore, proved that determining whether a graph has an EFX orientation for a given set of valuations is NP-hard. 

On the other hand, they showed that if we allow edges to be allocated to non-incident vertices, then an EFX allocation exists. However, they noted that {\em EFX orientations} are more desirable since they avoid lossy assignments (assigning an edge to a non-incident vertex). Motivated by this, they asked the following question \cite{Christodoulou2023}: 


\begin{quote}
    ``[A] question of interest is understanding for what classes of graphs an EFX orientation is guaranteed to exist. E.g., an EFX orientation always exists in trees, cycle graphs, and multistars.'' 
\end{quote}

\subsection{Our Contributions} In this paper, we make significant progress toward answering the above question. To capture their question systematically, we introduce the notion of \emph{strongly EFX orientable} graphs, which are graphs that have an EFX orientation regardless of valuation. All graphs are simple (i.e. no self loops or multiedges) unless stated otherwise. We show that a characterization of these strongly EFX orientable graphs has a surprising connection to the \emph{chromatic number} of the graph, where the chromatic number, denoted $\chi(G)$, is defined as the minimum number of colors needed to color the vertices of the graph such that no two adjacent vertices have the same color. 
In particular, we show that:

\begin{itemize}
    \item Any graph of chromatic number $\chi(G) \leq 2$ is strongly EFX-orientable.
    \item All strongly EFX-orientable graphs have chromatic number $\chi(G) \leq 3$. 
    \item There exist graphs of $\chi(G)=3$ which are not strongly EFX-orientable, as well as graphs with $\chi(G)=3$ which are strongly EFX-orientable, so this bound is sharp. 
    \item For the case of binary valuation function, we give a complete characterization. Given a graph $G$, the following two statements are equivalent:
    \begin{itemize}
        \item For any 0-1 additive valuation assigned to $G$, $G$ has an EFX orientation.
        \item For every subgraph $H \subseteq G$ such that $H$ is a forest consisting of trees $T_1, T_2...T_k$, for every $1 \leq i \leq k$ there exists $x_i \in T_i$ such that $\bigcup_{i=1}^k N_H(x_i)$ forms an independent set on $G$. 
    \end{itemize}


\end{itemize}

\subsection{Further Related Work}\label{sec:RW} Fair division has been extensively studied, with substantial work dedicated to understanding the existence of EFX allocations for, including but not limited to, special cases, approximation, EFX with charity, and efficiency. Below, we give a brief overview. 
\medskip

\noindent{\bf Special Cases.} In addition to the cases of two \cite{PR18} and three agents \cite{Chaudhury2020, Akrami23}, several special cases have been studied: \cite{berger2022almost} showed existence for four agents where one good may remain un-allocated (goes to charity). For the case of binary valuations, {\em i.e.,} every item is valued at $0$ or $1$, \cite{BuSongYu23} showed EFX existence with arbitrary many agents. \cite{LivanosMM22} extended this to restricted-additive valuations where every agent values good $j$ at $0$ or $v_j>0$, {\em i.e.,} $f_{i}(j) \in \{0,v_j\},\ \forall i\in N$. Extending the case of the identical valuations, \cite{Mahara23} showed that EFX exists if all the agents have one of two given valuation functions. \cite{GhosalPNV23} considered the case where all but two agents have identical valuation functions. Additionally, \cite{Mahara21} showed that EFX exists for $n$ agents when there are at most $n+3$ items.  \cite{GorantlaMV23} showed that EFX exists when there are $2$ types of objects and all agents have the same value for objects of the same type. 

\medskip


\noindent{\bf Relaxations of EFX: Approximation and Charity.}
EFX has been studied under several relaxations as well. The most notable of these are EFX with charity, and approximate EFX. \cite{PR18,chan2019maximin} gave an algorithm to compute $0.5$-approximate EFX, which was improved to $0.68$ by \cite{AmanatidisMN20}. \cite{CaragiannisGH19} showed the existence of EFX allocations where some items go to charity (remain unallocated) with $1/2$-approximate Nash Welfare guarantee. \cite{CKMS21} showed the existence of EFX allocations where at most $(n-1)$ items go to charity and no agent envies the charity. This was improved by \cite{berger2022almost} to $(n-2)$ charity. \cite{PR18, AmanatidisMN20} studied algorithms to find approximate EFX allocations. A series of works \cite{chaudhury2021improving,akrami2022efx} combined both the relaxations to get a $(1-\epsilon)$-approximate EFX with $\tilde{O}(\sqrt{|N|})$ charity. 
Another popular relaxation of envy-freeness is \emph{envy-freeness up to one good (EF1)} where no agent envies another agent following the removal of \emph{some} good from the other agent's bundle. The existence of EF1 allocations is well-known for any number of agents, even when agents have general monotone valuation functions~\cite{LiptonMMS04}. 
\medskip


\noindent{\bf Efficiency with Fairness.} Efficiency alongside fairness is another requirement that is extensively studied. Two of the most popular measures of efficiency are Pareto-optimality and Nash welfare. \cite{Caragiannis19} showed that any allocation that has the maximum Nash welfare is guaranteed to be Pareto-optimal (efficient) and EF1 (fair). \cite{BKV18} gave a pseudo-polynomial algorithm to find an allocation that is both EF1 and Pareto-optimal. 
Other works explore relaxations of EFX with high Nash welfare~\cite{CaragiannisGH19, CKMS21,feldman2023}. 
\medskip

\noindent{\bf Organization.} In Section \ref{sec:prel} we formally define the fair division problem and strong EFX-orientability. Section \ref{sec:nec} discusses the necessary conditions for strong EFX-orientability and obtains the upper bound of 3 on the chromatic number. In the process it completely characterizes 0-1 strong  EFX-orientability. Section \ref{sec:suf} discusses the sufficiency conditions through bipartiteness or near-bipartiteness of the graph. Section \ref{sec:fs} discusses structures in certain 3-chromatic graphs that prevent strong EFX-orientability, leaving 3-chromatic graphs as an ambiguous case.  Finally, we conclude with a brief discussion in Section \ref{sec:disc}.


\section{Preliminaries}\label{sec:prel}

A discrete fair division instance is given by $([n], M, \mathcal{F})$, where $[n]$ is the set of agents, $M$ is the set of goods, and $\mathcal{F} = \{ f_1,f_2 ... f_n \}$ is the set of valuation functions, one for each agent. For each $i\in [n]$, $f_i:2^M \rightarrow \mathbb{R}_{\geq 0}$ represents agent $i$'s value over the bundles (sets) of goods. Thus, $f_i$ is non-negative and \emph{monotone}, {\em i.e.,} for $A,B \subseteq M$, $A \subseteq B$ implies $0\le f_i(A) \leq f_i(B)$. A valuation function $f_i$ is said to be \emph{additive} if for every subset $X=\{m_1, m_2, ...m_k\} \subseteq M$, $f_i(X) = f_i(\{ m_1 \}) + f_i(\{ m_2 \}) + ... +f_i(\{ m_k \})$, and a valuation is \emph{0-1 additive} if it is additive and $f_i(\{m \}) \in \{ 0,1 \}$ for all $m \in M$. 

An \emph{allocation} $X=(X_1,X_2...X_n)$ is a partition of $M$ where agent $i$ is assigned the bundle $X_i$. An allocation is said to be {\em envy-free (EF)}, if $f_i(X_i) \ge f_i(X_j)$ for all $i,j\in [n]$. Since such an allocation may not exist when goods are indivisible, we consider the following relaxation:

\begin{definition}
An allocation is {\em envy-free up to any good (EFX)} if for all $i,j\in[n]$, $f_i(X_i) \geq f_i(X_j - \{g\})$ for all $g \in X_j$.     
\end{definition}

{\bf Graphical Valuations.} The graphical version of a discrete fair division setting is represented by $(G, \mathcal{F})$, where $G=(V,E)$ is an undirected graph with vertices being the agents $[n]$ ($V=[n]$), and edges being the goods ($E=M$). Every agent vertex values only her incident edges, {\em i.e.,} her valuation functions $f_i=2^{E} \rightarrow \mathbb{R}_{\geq 0}$ has the property that $f_i(X) = f_i(X \cap E(i))$ for all $X \subseteq E$, where $E(i)$ is the set of edges incident to $i$. 

An \emph{orientation} of a graph $G$ assigns each edge $e \in G$ an incident vertex as the head and the other incident vertex as a tail. We say that an orientation gives an allocation $X_1,X_2...X_n$, where $X_i$ is the set of all edges that have vertex $i$ as their head. In this paper, we will be mainly dealing with two types of strong EFX-orientability, the general case and the binary case.  

\begin{definition}
    A graph is \emph{strongly EFX-orientable} if, for any assigned monotone valuation, there exists an EFX orientation. 
\end{definition}

\begin{definition}
    A graph $G$ is \emph{0-1 strongly EFX-orientable} if for any additive 0-1 valuation on the edges, $G$ has an EFX-orientation. 
\end{definition}

\section{Strong EFX Orientability: Necessary Condition via Tripartiteness}\label{sec:nec}
In this section, we prove the chromatic number upper bound of 3 on strongly EFX-orientable graphs. We will do this by analyzing graphs of chromatic number greater than 3, and finding a bad valuation on the edges that makes an EFX orientation impossible. It turns out that it is enough to only look at situations where the valuations on the edges are either 0 or 1. Therefore, we will start by obtaining a complete characterization of \emph{0-1 strongly EFX-orientable} graphs. 
\medskip
\medskip

\subsection{Characterization of 0-1 Strongly EFX-Orientable Graphs}
We first provide a complete characterization of 0-1 strongly EFX-Orientable graphs, and then provide a simpler necessary condition. These together will let us prove the upper bound of $3$ on the chromatic number of 0-1 strongly EFX-orientable graphs, and thereby on the strongly EFX-orientable graphs as well. 

The condition that completely characterizes 0-1 strongly EFX-orientable graph depends on certain tree structures and independent set conditions described in the following lemma. The condition is a bit of a mouthful, so we demonstrate it for $K_{2,4}$ graph in Figure \ref{fig:2}.

\begin{lemma}\label{lem:01char}
    A graph $G$ is 0-1 strongly EFX-orientable if and only if, for every subgraph $H \subseteq G$ such that $H$ is a forest consisting of trees $T_1, T_2...T_k$, for every $1 \leq i \leq k$ there exists $x_i \in T_i$ such that $\bigcup_{i=1}^k N_H(x_i)$ forms an independent set on $G$, where $N_H(x)$ denotes the neighbors of $x$ in $H$ (Figure \ref{fig:2} demonstrates this condition for a $K_{2,4}$ graph). 
\end{lemma}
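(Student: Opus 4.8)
The plan is to characterize when a 0-1 additive valuation admits an EFX orientation, and then quantify over all such valuations. First I would observe that a 0-1 additive valuation on $G$ is equivalent to specifying, for each vertex $v$, a subset $E_v \subseteq E(v)$ of the edges it values at $1$ (the rest valued at $0$); equivalently, the pair $(G, \mathcal{F})$ restricted to positively-valued edges is captured by a spanning subgraph $H \subseteq G$ where $e \in H$ iff at least one endpoint values $e$ at $1$. The key reduction is that an edge valued $0$ by both endpoints can be oriented arbitrarily and never causes envy, so we may assume every edge of $H$ is valued $1$ by at least one endpoint; and an edge valued $1$ by exactly one endpoint should simply be oriented toward that endpoint (this is always weakly better — I'd need to check it cannot create new EFX violations, which follows because giving an agent an edge they value at $1$ only helps them and hurts no one, while the other endpoint values it $0$). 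After these reductions, the only genuinely constrained edges are those valued $1$ by both endpoints, and the adversary's worst case is to make $H$ a forest: in a component containing a cycle there's enough ``slack'' to orient around it, whereas I expect the binding obstruction to come precisely from acyclic components. So the quantification over valuations reduces to quantification over forests $H \subseteq G$.

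**The core claim per forest.** Fix a forest $H = T_1 \cup \dots \cup T_k \subseteq G$ with every edge valued $1$ by both endpoints (and all other edges of $G$ valued $0$). I claim this valuation has an EFX orientation iff for each $i$ there is a vertex $x_i \in T_i$ with $\bigcup_i N_H(x_i)$ independent in $G$. For the direction that the condition is necessary: in any orientation of $H$, each tree $T_i$ has at least one source (a vertex with no in-edge from $H$), call it $x_i$ — actually I'd want the vertex that receives no edge it values, which in a tree orientation is a vertex all of whose $H$-edges point away; such a vertex $x_i$ gets value $0$ from $H$. Then each neighbor $w \in N_H(x_i)$ holds an edge $x_iw$ valued $1$ by $x_i$, so for EFX (with respect to agent $x_i$, who has value $0$) agent $w$'s bundle must have value $0$ to $x_i$ after removing any single good — forcing $w$ to hold \emph{only} that one edge among edges incident to $x_i$... more carefully, $x_i$ envies $w$ unless $w$'s bundle minus a good has value $0$ to $x_i$, i.e. $w$ holds at most one edge valued $1$ by $x_i$; since $x_iw \in H$ is such an edge, $w$ holds no \emph{other} edge of $G$ incident to $x_i$. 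Pushing this across all neighbors of all the $x_i$'s and using that edges of $G$ between two such neighbors $w, w'$ (of possibly different $x_i$'s, or the same) would have to be oriented somewhere and create a violation, one derives that $\bigcup_i N_H(x_i)$ must be independent in $G$. For sufficiency: given the $x_i$, orient all edges of $H$ away from the $x_i$'s (possible since $T_i$ is a tree — root at $x_i$ and orient toward leaves... wait, away from root means toward leaves, giving each non-root vertex exactly one in-edge), and orient every edge of $G \setminus H$ toward an endpoint that is \emph{not} in $\bigcup_i N_H(x_i)$ when possible; when both endpoints lie in that set this is impossible, but that's exactly the case the independence hypothesis rules out. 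Then check no EFX violation arises: agents in $\bigcup N_H(x_i)$ receive exactly one valued edge, agents $x_i$ receive value $0$ but everyone else's relevant bundle also has value $\le 1$, and a careful case analysis closes it.

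**Assembling the global statement.** Combining: $G$ is 0-1 strongly EFX-orientable iff \emph{every} 0-1 additive valuation admits an EFX orientation iff (by the reductions) every forest $H \subseteq G$, viewed with the all-ones-on-$H$ valuation, admits an EFX orientation iff (by the core claim) every forest $H = T_1 \cup \dots \cup T_k \subseteq G$ admits a choice of $x_i \in T_i$ with $\bigcup_i N_H(x_i)$ independent in $G$. One subtlety to handle explicitly: I should confirm that allowing the adversary mixed valuations (some doubly-valued edges, some singly-valued, some zero) is genuinely no stronger than the pure forest case — this is where the reductions in the first paragraph do the work, and I'd want to state them as a preliminary claim and prove that the singly-valued and zero edges can be stripped off without loss.

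**Expected main obstacle.** The hard part will be the necessity direction of the core claim: showing that if \emph{every} valid choice of sources fails the independence test then \emph{no} orientation is EFX. The issue is that different orientations of $H$ produce different source sets $\{x_i\}$, so I can't fix one; I need to argue that whatever orientation is chosen, the induced sources, together with the forced orientations of $G \setminus H$ edges around their neighborhoods, necessarily collide. I'd handle this by showing that in \emph{any} EFX orientation, the set of vertices receiving $H$-value $0$ must (a) include at least one vertex per tree and (b) have the property that nobody holds two edges valued $1$ by the same such vertex, and then deriving independence of the union of their neighborhoods as a consequence — essentially reverse-engineering a valid $\{x_i\}$ from an arbitrary EFX orientation. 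Managing the interaction between $H$-edges and the $0$-valued $G \setminus H$ edges in the EFX conditions is the delicate bookkeeping I'd expect to spend the most care on.
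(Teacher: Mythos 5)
Your proposal follows essentially the same route as the paper's proof: orient asymmetrically valued edges toward the endpoint that values them, dispose of cyclic components of the doubly-valued subgraph by sacrificing one cycle edge, and reduce to the tree components, where the sources of an arbitrary orientation give necessity of the independence condition and rooting each tree at the chosen $x_i$ gives sufficiency. One local slip worth fixing: the EFX constraint from a source $x_i$ (whose bundle it values at $0$) toward a neighbor $w$ holding $x_iw$ forces $X_w$ to be exactly the singleton $\{x_iw\}$ — not merely free of other edges incident to $x_i$ — and it is this stronger fact that you actually need (and implicitly use) to conclude that a zero-valued edge between two envied neighbors cannot be oriented anywhere.
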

  \begin{figure}[h]
        \centering
        \includegraphics[width=0.4\textwidth]{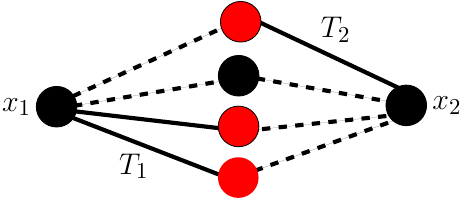}
\caption{   
This figure provides a $K_{2,4}$ graph as an example to demonstrate the characterization of Lemma \ref{lem:01char}. If an adversary chooses $T_1$ and $T_2$ as the forest $H$, we can respond by choosing $x_1 \in T_1$ and $x_2 \in T_2$, and $\bigcup_{i=1}^k N_H(x_i)$ would be the red vertices, which is indeed an independent set on $G$. In general, for any forest an adversary chooses on a $K_{2,4}$ (or any bipartite graph), if a tree is just a single vertex its neighbors in the forest form the empty set, so we can choose the vertex and ignore it. Otherwise, we can choose all $x_i$ to be in the same partite set, and its neighbors in the forest will all be from of the same partite set and hence are an independent set on $G$, so this is an example of a graph that fulfills this condition.
}
        \label{fig:2}
   \end{figure}
   
  \begin{proof}
        (if direction) Suppose an adversary gives us a 0-1 additive valuation assignment a graph which satisfies these conditions. For all the edges with asymmetric valuation (where the two endpoints do not value the edge equally), orient them towards the vertex that values the edge. Call the vertices that receive the asymmetrically valued edges \emph{special} vertices. Let $H'$ be the subgraph consisting of edges valued at 1 for both endpoints. For each component of $H'$ that contains a special vertex, create a spanning tree with the special vertex as the root, and give each vertex the edge from its parent. For a component $C \subseteq H'$ that contains a cycle but has no special vertex, we remove one edge $uv \in C$ where $uv$ is in a cycle, and construct a spanning tree through $C-uv$ with $v$ as the root. Assign each vertex in $C$ the edge from its parent, and assign $uv$ to $v$. Orient the rest of the edges in these components arbitrarily. Since all vertices in such components receive an edge they value, they do not envy anyone else, and are hence not envied by any vertex. \\
        Let $H=T_1,T_2,...T_k$ be the components of $H'$ that are trees which do not have a special vertex. For $H=T_1,T_2,...T_k$, retrieve the vertices $x_1,x_2...x_k$ such that $\bigcup_{i=1}^k N_H(x_i)$ forms an independent set in $G$. For each $T_i$, set $x_i$ to be the root, and give each vertex in $T_i$ the edge from its parent. Since all vertices except $x_1,x_2,...x_k$ in $H$ received at least one edge of value 1, the only vertices envied are $\bigcup_{i=1}^k N_H(x_i)$. We now need to orient the edges of weight 0. Every edge of weight 0 has a non-envied endpoint, as $\bigcup_{i=1}^k N_H(x_i)$ is an independent set on $G$, so no edges can exist between two envied vertices, hence we can orient the remaining edges towards a non-envied endpoint. Since the only envied vertices have exactly one item, this orientation is EFX. 
        \medskip
        
        (only if direction) Suppose there exists a forest $H$ such that for its trees $T_1, T_2...T_k$, for every collection of $x_1,x_2...x_k$ such that $x_i \in T_i$, their neighborhoods in $H$ do not form an independent set on $G$. Set all the edges in $H$ to have weight 1 for both vertices, and set the rest of the edges to have weight 0 for both vertices. Note that any orientation on a tree must have a source, so for any orientation in $G$, we can choose a collection of sources in $s_1, s_2...s_k$ in $H$ such that $s_i \in T_i$ for $1 \leq i \leq k$. Since each $s_i$ receives nothing of value and has all its valued edges go to its neighbors, all vertices in $N_H(s_i)$ are envied by $s_i$, so all vertices in $\bigcup_{i=1}^k N_H(s_i)$ must be envied. However, since $\bigcup_{i=1}^k N_H(s_i)$ does not form an independent set on $G$, then an edge of weight 0 must be between two vertices in that set. Assigning that edge will break EFX since both its endpoints are envied due to the orientation on $H$, hence such a graph is not 0-1 strongly EFX-orientable. 
    \end{proof}

    We now proceed with a simpler condition that is necessary (but not sufficient) for 0-1 strong EFX-orientability, and come up with a method to generate many graphs that violate this condition. 
\begin{corollary}\label{cor:01nec}
    Let $G$ be a 0-1 strongly EFX orientable graph. For any matching $M$ on $G$, the subgraph induced by the vertices of $M$ has an independent set of size $|M|$. 
\end{corollary}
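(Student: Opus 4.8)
The plan is to apply Lemma \ref{lem:01char} directly, taking the matching $M$ itself to play the role of the forest $H$. Since $M$ is a matching, viewed as a subgraph of $G$ it is a vertex-disjoint union of edges, hence a forest whose connected components $T_1, T_2, \dots, T_{|M|}$ are exactly the edges of $M$; each $T_i$ is a tree on two vertices, say $T_i = \{a_i, b_i\}$.

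Because $G$ is assumed to be 0-1 strongly EFX-orientable, Lemma \ref{lem:01char} applied to the forest $H = M$ yields a choice of $x_i \in T_i$ for each $i$ such that $\bigcup_{i=1}^{|M|} N_H(x_i)$ is an independent set in $G$. For a two-vertex tree $T_i = \{a_i, b_i\}$, whichever endpoint is selected as $x_i$, its only neighbor in $H$ is the other endpoint; write $y_i$ for that other endpoint. Thus $\bigcup_{i=1}^{|M|} N_H(x_i) = \{y_1, y_2, \dots, y_{|M|}\}$, a set of vertices all lying among the endpoints covered by $M$. Since the edges of a matching are pairwise vertex-disjoint, the $y_i$ are $|M|$ pairwise distinct vertices.

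It remains only to observe that this set, being independent in $G$, is in particular independent in the subgraph of $G$ induced by the vertices of $M$, and that it has size exactly $|M|$; this gives the claimed independent set. The single point that needs any care is the distinctness of the $y_i$, which is immediate from the matching property; beyond that the corollary is just the specialization of Lemma \ref{lem:01char} to the case where $H$ is a matching, so I do not anticipate any real obstacle.
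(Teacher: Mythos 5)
Your proposal is correct and follows exactly the paper's own argument: take the matching as the forest in Lemma \ref{lem:01char}, note each selected vertex's unique neighbor in $H$ is its matching partner, and conclude the $|M|$ partners form the desired independent set. No issues.
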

    \begin{proof}
        Let $M$ consist of edges $e_1, e_2, ....e_k$. Each edge in the matching can be considered a tree, and the edges form a forest. Hence, by Lemma \ref{lem:01char}, for each edge $e_i$ in $M$, we can select $v_i \in e_i$ such that $\bigcup_{i=1}^k N_M(v_i)$ is an independent set. Since each vertex in $M$ has exactly one other neighbor also in $M$, then $\bigcup_{i=1}^k N_M(v_i)$ consists of $|M|$ vertices, and hence is an independent set of size $|M|$. 
    \end{proof}

Our next lemma gives us a method to generate more graphs that violate the conditions of the previous corollary by using subdivisons. The \emph{subdivision} of an edge $uv \in E(G)$ is defined as deleting the edge $uv$, adding a new vertex $w$, and adding the edges $uw$ and $wv$ in $G$, effectively placing a vertex in the middle of an edge and dividing it into two edges. A graph $G'$ is said to be a subdivision of $G$ if $G'$ can be obtained through an iteration of edge subdivisions starting from $G$.

\begin{lemma}\label{lem:subdiv}
    Let $G$ be a graph that violates the matching condition in Corollary \ref{cor:01nec}. Let $G'$ be the graph obtained by subdividing any edge of $G$ twice. Then $G'$ violates the matching condition of Corollary \ref{cor:01nec} as well. 
\end{lemma}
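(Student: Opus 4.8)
The plan is to rephrase ``violates the matching condition'' in terms of independence numbers and then, inside $G'$, exhibit a matching that is one larger than the bad matching of $G$ yet whose vertex set still admits no correspondingly large independent set. Write $\alpha(H)$ for the size of a largest independent set of a graph $H$. By definition, $G$ violating the condition of Corollary~\ref{cor:01nec} means there is a matching $M$ in $G$ with $\alpha\big(G[V(M)]\big) \le |M| - 1$. Let $uv$ be the edge that is subdivided twice, so that in $G'$ the edge $uv$ is replaced by a path $u - w_1 - w_2 - v$ on two fresh vertices $w_1, w_2 \notin V(G)$. The crucial feature is that the middle edge $w_1 w_2$ is ``private'': its endpoints are brand new, so it may be thrown into any matching of $G$, while its adjacencies $u w_1$ and $w_2 v$ will let us control independent sets.

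I would define a matching $M'$ of $G'$ by cases on whether the subdivided edge lies in $M$: if $uv \notin M$, set $M' = M \cup \{w_1 w_2\}$; if $uv \in M$, set $M' = (M \setminus \{uv\}) \cup \{u w_1, w_2 v\}$. In both cases a quick check shows $M'$ is a matching of $G'$ (in the second case, removing $uv$ frees $u$ and $v$, and $w_1, w_2$ are new), that $|M'| = |M| + 1$, and that $V(M') = V(M) \cup \{w_1, w_2\}$. I would also record the structural fact that $G'[V(M)] = G[V(M)] - uv$: subdivision deletes no vertices and creates no edge with both ends in $V(M)$, and its only other effect is to delete $uv$ (which may or may not have had both ends in $V(M)$).

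It then suffices to prove $\alpha\big(G'[V(M')]\big) \le |M| = |M'| - 1$, since that exhibits $M'$ as a matching witnessing that $G'$ violates Corollary~\ref{cor:01nec}. Let $S$ be independent in $G'[V(M')]$. As $w_1 w_2$ is an edge, $S$ meets $\{w_1, w_2\}$ in at most one vertex; split into three cases. If $S \cap \{w_1, w_2\} = \emptyset$, then $S$ is independent in $G'[V(M)] = G[V(M)] - uv$, so $|S| \le \alpha\big(G[V(M)] - uv\big) \le \alpha\big(G[V(M)]\big) + 1 \le |M|$, using that deleting an edge raises the independence number by at most one. If $w_1 \in S$, then $u \notin S$ because the edge $u w_1$ is present; hence $S \setminus \{w_1\} \subseteq V(M)$ avoids $u$, so the missing edge $uv$ is irrelevant and $S \setminus \{w_1\}$ is already independent in $G[V(M)]$, giving $|S| \le \alpha\big(G[V(M)]\big) + 1 \le |M|$. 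The case $w_2 \in S$ is symmetric, with $v$ and the edge $w_2 v$ in place of $u$ and $u w_1$. In every case $|S| \le |M|$, as needed.

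The one genuine subtlety, and the step I would be most careful with, is that forming $G'$ \emph{deletes} $uv$, an operation that can only \emph{increase} independence numbers; so a careless choice (such as keeping $M$ and ignoring $w_1, w_2$) might push $\alpha\big(G'[V(M)]\big)$ up to $|M|$ and lose the strict inequality. Adding the private edge $w_1 w_2$ fixes this on both fronts: it enlarges the matching by one, and through the adjacencies $u w_1, w_2 v$ it forces any independent set using a new vertex to omit precisely the endpoint that would otherwise exploit the deleted edge $uv$ — keeping the independence number of $G'[V(M')]$ pinned at $|M|$.
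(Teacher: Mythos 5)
Your proof is correct and takes essentially the same approach as the paper: the identical two-case construction of the augmented matching $M'$ (adding $w_1w_2$ when $uv\notin M$, or swapping $uv$ for $uw_1, w_2v$ when $uv\in M$), followed by the same case analysis on which of the new vertices an independent set may contain. The only difference is that you argue the implication directly (bad matching in $G$ yields bad matching in $G'$) while the paper argues the contrapositive, and your handling of the deleted edge $uv$ is if anything slightly more careful than the paper's.
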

    \begin{proof}
        Let the subdivided edge on $G'$ consist of $u,x_1,x_2,v$, where $uv$ is the original edge in $G$ and $x_1,x_2$ are the added vertices. Suppose that for any matching in $G'$, there exists an independent set of the same size on the graph induced by the vertices of that matching. Take a matching $M$ on $G$. We split this problem into two cases: \\
        Case 1: $uv$ is not in $M$. Consider the matching $M'$ on $G'$, which we construct by taking all the edges in $M$ and the edge $x_1x_2$. We can find an independent set $I'$ of size $|M|+1$ in the subgraph induced by the vertices of $M'$. We claim that $I'-\{ x_1, x_2 \}$ is an independent set on $G'$. Indeed, since all vertices other than $u$ and $v$ have the same neighborhoods, we only need to check $u$ and $v$ to ensure that both weren't selected. Since exactly one of $x_1,x_2$ must be in $M'$ in order for $I'$ to have a size of $|M|+1$, if both $u$ and $v$ were in $I'$, then $I'$ cannot be an independent set. Hence, $M$ has a corresponding independent set of size $|M|$ on the subgraph induced by its vertices on $G$. \\
        Case 2: $uv$ is $M$. Consider the matching $M'$ on $G'$, which we construct by taking $M-uv$ and the edges $ux_1$ and $x_2v$. There exists an independent set $I'$ of size $|M|+1$ on the subgraph induced by $M'$. We claim that at least one of $u,v$ must be in $I'$. Indeed, if neither $u$ nor $v$ is in $I'$, then both $x_1$ and $x_2$ must be in $I'$, which is impossible since they share an edge. Without loss of generality, suppose that $v \in I'$. We claim that $I'-\{ u,x_1\}$ is an independent set on $G$. This is true since all vertices in $I'-\{ u,x_1\}$ other than $v$ have the same neighborhood, and $u$ cannot be in $I'-\{ u,x_1\}$, meaning that $I'-\{ u,x_1\}$ is an independent set of of size $|M|$, as desired. 
    \end{proof}

Finally, we prove that the chromatic number of a 0-1 strongly EFX-orientable graph, and hence a strongly EFX-orientable graph, is upper bounded by 3. To do this, we will need to use the following result from Zang \cite{Zang1998} and Thomassen \cite{Thomassen2001}.
\begin{lemma}[\cite{Zang1998,Thomassen2001}]\label{lem:ZT}
    A graph $G$ of $\chi(G) \geq 4$ contains a subdivision of a $K_4$ where each edge of the $K_4$ corresponds to a path of odd length (also known as a totally odd subdivision). 
\end{lemma}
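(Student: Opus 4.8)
The plan is to prove the logically equivalent contrapositive, which is the form in which Zang states it: every graph with no totally odd $K_4$-subdivision is $3$-colorable. So suppose for contradiction that $G$ contains no totally odd $K_4$-subdivision yet $\chi(G)\ge 4$, and take a vertex-minimal such $G$. Minimality forces $G$ to be $4$-\emph{critical}, meaning $\chi(G)=4$ while every proper subgraph is $3$-colorable; and since being totally-odd-$K_4$-free is hereditary, every proper subgraph is simultaneously $3$-colorable and free of the forbidden subdivision. From criticality I get the standard structural facts $\delta(G)\ge 3$ and $2$-connectivity, which anchor the rest of the argument.

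First I would boost the connectivity to $3$. If $G$ admitted a separation of order at most $2$, I would apply the Gallai--Dirac decomposition of critical graphs across the cut and reassemble a $3$-coloring of $G$ from $3$-colorings of the (smaller, hence well-behaved) pieces, contradicting $\chi(G)=4$. The only subtlety is matching up colors and tracking path parities across the separator, but a cut of size at most $2$ leaves few cases. After this reduction I may assume $G$ is $3$-connected.

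The heart of the proof is to produce a \emph{totally odd} $K_4$-subdivision inside a $3$-connected, non-bipartite graph. A classical theorem guarantees that any $3$-connected graph on at least four vertices contains a topological $K_4$: four branch vertices joined by six internally disjoint paths. Because $\chi(G)\ge 4$ the graph is non-bipartite and hence contains an odd cycle, and $3$-connectivity supplies many internally disjoint paths linking this odd cycle to the branch structure. The idea is to \emph{reroute} the six paths of the subdivision through the odd cycle so as to flip their parities one at a time, using Menger-type disjoint path systems to keep the six paths internally disjoint after each modification, until all six lengths are odd.

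The hard part will be controlling all six parities at once. A single reroute can change the parity of more than one of the six paths, or can fail outright because the odd cycle cannot be attached disjointly to the particular path one wishes to flip; these interactions are exactly where Zang and Thomassen expend their effort. The way through is a careful case analysis of how the odd cycle meets the subdivision, combined with local Kempe-chain recoloring arguments that exploit $4$-criticality---a degree-$3$ branch vertex together with the non-$3$-colorability constraint pins down enough local structure to force a configuration that either yields a totally odd $K_4$ directly or delivers a $3$-coloring. Either outcome contradicts the choice of $G$. Accordingly I would treat the connectivity reduction as routine and concentrate essentially all the work on the disjoint-rerouting lemma that flips one path's parity without disturbing the others.
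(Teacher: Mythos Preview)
The paper does not prove this lemma at all: it is stated with a citation to Zang and Thomassen and then used as a black box in the proof of Lemma~\ref{lem:01UB}. So there is no in-paper argument to compare your proposal against.

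As for the proposal itself, what you have written is a plausible high-level outline of the contrapositive strategy those authors follow (minimal counterexample, $4$-criticality, connectivity boost, locate a topological $K_4$, then fix parities), but it is not a proof. You explicitly defer the ``disjoint-rerouting lemma that flips one path's parity without disturbing the others,'' and that deferred step is essentially the entire content of both cited papers; Thomassen's argument alone runs to several dozen pages of case analysis precisely because the interactions you flag (a single reroute changing multiple parities, or failing to attach disjointly) do not yield to a short uniform argument. Your reduction to $3$-connectivity via Gallai--Dirac-style decomposition of critical graphs is also more delicate here than in the ordinary $K_4$-subdivision setting, since you must check that the pieces you glue do not themselves create a totally odd $K_4$-subdivision and that parity information survives the gluing. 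In short: the skeleton is right, but everything load-bearing has been left as a promissory note, and for the purposes of this paper a bare citation is both what the authors do and what is appropriate.
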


\begin{lemma}\label{lem:01UB}
    If a graph $G$ is 0-1 strongly EFX-orientable, then $\chi(G) \leq 3$. 
\end{lemma}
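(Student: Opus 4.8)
The plan is to prove the contrapositive: if $\chi(G) \geq 4$, then $G$ is not 0-1 strongly EFX-orientable. By Corollary \ref{cor:01nec} it suffices to exhibit a single matching $M$ on $G$ for which the subgraph induced by the endpoints of $M$ has no independent set of size $|M|$; that is, to show $G$ violates the matching condition.

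First I would invoke Lemma \ref{lem:ZT}: since $\chi(G) \geq 4$, the graph $G$ contains a subgraph $H$ that is a totally odd subdivision of $K_4$. The strategy is to show that $H$ already violates the matching condition, and then to lift this violation up to $G$. The base case is $K_4$ itself, which visibly violates the condition: take a perfect matching $\{uv, wz\}$ of size $2$; its vertex set is all of $V(K_4)$, and the induced subgraph $K_4$ has independence number $1 < 2$.

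Next I would bridge from $K_4$ to $H$ using Lemma \ref{lem:subdiv}. A totally odd subdivision of $K_4$ replaces each edge by a path of odd length, i.e.\ subdivides each edge an even number of times (an odd-length path has an even number of internal vertices), so $H$ is obtainable from $K_4$ by finitely many applications of the ``subdivide an edge twice'' operation. Since Lemma \ref{lem:subdiv} says this operation preserves the property of violating the matching condition, and $K_4$ violates it, an induction on the number of double-subdivisions shows that $H$ violates the matching condition. Hence there is a matching $M$ in $H$ such that $H[V(M)]$ has no independent set of size $|M|$.

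Finally I would transfer this to $G$. As $H \subseteq G$, the set $M$ is also a matching in $G$, and since $E(H) \subseteq E(G)$ we have $E(H[V(M)]) \subseteq E(G[V(M)])$; adding edges to a graph on a fixed vertex set cannot increase its independence number, so $G[V(M)]$ also has no independent set of size $|M|$. Thus $G$ fails the matching condition of Corollary \ref{cor:01nec}, and therefore $G$ is not 0-1 strongly EFX-orientable, which is the contrapositive we wanted. I do not expect a genuine obstacle here: the two points needing care are (i) that every totally odd subdivision of $K_4$ is reachable by iterating the double-subdivision move, and (ii) the monotonicity remark that violating the matching condition passes to supergraphs; both are routine, and the real content is packaged in the cited Lemma \ref{lem:ZT}.
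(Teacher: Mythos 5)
Your proposal is correct and follows essentially the same route as the paper: invoke Lemma \ref{lem:ZT} to find a totally odd $K_4$-subdivision, observe that $K_4$ violates the matching condition of Corollary \ref{cor:01nec}, and propagate the violation through repeated double-subdivisions via Lemma \ref{lem:subdiv}. Your writeup is in fact slightly more careful than the paper's, since you make explicit both the induction on double-subdivisions and the monotonicity step transferring the violation from the subdivision $H$ to the supergraph $G$.
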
    
    \begin{proof}
        %
        We claim that $K_4$ violates the matching condition in Corollary \ref{cor:01nec}. 
        Indeed, there is no independent set of size 2 on a $K_4$, while it's possible to have a matching of size 2 in a $K_4$. By Lemma \ref{lem:subdiv}, a totally odd subdivision of a graph that violates the matching condition will violate the matching condition itself as well. Hence, 0-1 strongly EFX-orientable graphs cannot contain a totally odd $K_4$ subdivision, so $\chi(G) \leq 3$ by Lemma \ref{lem:ZT}. 
    \end{proof}

The next theorem follows as a corollary of Lemma \ref{lem:01UB}.
\begin{theorem}
    \label{thm:UB}
    If a graph is strongly EFX-orientable, then $\chi(G) \leq 3$.
   
\end{theorem}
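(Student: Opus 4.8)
The plan is to observe that this theorem is essentially immediate from Lemma \ref{lem:01UB}, with the only content being that the class of monotone valuations subsumes the class of $0$-$1$ additive valuations. Concretely, suppose $G$ is strongly EFX-orientable. By definition, this means that \emph{for every} monotone valuation assigned to the edges of $G$, there is an EFX orientation. In particular, every $0$-$1$ additive valuation $f$ on $E(G)$ is monotone: additivity plus nonnegativity of the singleton values $f_i(\{m\}) \in \{0,1\}$ immediately gives $A \subseteq B \Rightarrow f_i(A) \le f_i(B)$. Hence the hypothesis applies to every $0$-$1$ additive valuation, which is exactly the statement that $G$ is $0$-$1$ strongly EFX-orientable.

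Having reduced to the binary case, I would then simply invoke Lemma \ref{lem:01UB}, which asserts that any $0$-$1$ strongly EFX-orientable graph has chromatic number at most $3$. Chaining the two implications yields $\chi(G) \le 3$, as desired.

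There is no real obstacle here: all the difficulty has already been absorbed into the earlier results, in particular the characterization of Lemma \ref{lem:01char}, the subdivision-stability of the matching obstruction (Corollary \ref{cor:01nec} and Lemma \ref{lem:subdiv}), and the structural input from Lemma \ref{lem:ZT} that high chromatic number forces a totally odd $K_4$ subdivision. The one small thing worth stating explicitly is the containment of valuation classes, so that the reader sees why a ``for all monotone valuations'' guarantee specializes correctly to the $0$-$1$ additive regime; everything else is a one-line deduction. Thus the proof is a short corollary-style argument rather than an independent development.
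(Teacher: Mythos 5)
Your proposal is correct and follows exactly the paper's own argument: a strongly EFX-orientable graph is in particular $0$-$1$ strongly EFX-orientable (since $0$-$1$ additive valuations are monotone), and Lemma \ref{lem:01UB} then gives $\chi(G)\le 3$. The explicit remark about the containment of valuation classes is a harmless, slightly more careful rendering of the paper's one-line deduction.
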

    \begin{proof}
    Any strongly EFX-orientable is 0-1 strongly EFX-orientable, so $\chi(G) \leq 3$ follows from Lemma \ref{lem:01UB}. 
    \end{proof}

\section{Strong EFX-Orientability: Sufficient Condition via Bipartiteness}\label{sec:suf}
In this section, we show that if the graph is almost bipartite, then it is strongly EFX-orientable. This in turn implies that graphs with chromatic number at most two are strongly EFX orientable.

\begin{lemma}\label{lem:suf1}
    Any bipartite graph is strongly EFX-orientable. 
\end{lemma}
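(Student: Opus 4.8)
The plan is to write down an explicit EFX orientation of an arbitrary bipartite graph $G$ with bipartition $V = A \cup B$ under an arbitrary monotone valuation $\mathcal{F}$. The guiding principle is the following reduction of the EFX condition in a simple graphical instance: if a vertex $v$ envies the bundle $X_u$ of a vertex $u$, then since $f_v(X_u - \{g\}) = f_v((X_u - \{g\})\cap E(v))$ and $X_u \cap E(v)$ is contained in the single edge $uv$, the only constraint that can fail is $f_v(X_v) \ge f_v(\{uv\})$, and even that is vacuous once $|X_u| = 1$ (the unique edge of $X_u$ is precisely the one that gets removed). So it suffices to produce an orientation in which every vertex that anyone could envy holds exactly one edge.

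First I would, for each non-isolated $a \in A$, fix an edge $e_a \in E(a)$ maximizing $f_a(\{e_a\})$ over $e \in E(a)$. The key structural observation — which replaces any appeal to Hall's theorem — is that these chosen edges are pairwise distinct: two distinct vertices of the independent set $A$ are non-adjacent, so an edge incident to $a$ cannot equal an edge incident to $a' \ne a$. Then I define the orientation by sending each $e_a$ toward $a$, and sending every remaining edge toward its (unique) endpoint in $B$. By construction every $a \in A$ receives exactly $\{e_a\}$ (or nothing if isolated), so $|X_a| \le 1$, and since $e_a \in X_a$ attains $a$'s maximum singleton value, monotonicity gives $f_a(X_a) \ge f_a(\{e\})$ for every $e \in E(a)$, so $a$ envies nobody even up to one good.

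It then remains to run the case analysis over ordered pairs $(u,v)$. Pairs with both coordinates in $A$, or both in $B$, share no edge and are trivially fine. An $A$-vertex envies nobody by the previous paragraph. Finally, if $b \in B$ envies $a \in A$, then $X_a$ is a singleton, so deleting its unique edge leaves the empty set and the EFX inequality $f_b(X_b) \ge f_b(\emptyset) = 0$ holds. Hence the orientation is EFX, and since the bipartite graphs are exactly the graphs with $\chi(G) \le 2$, this also yields the corollary that $\chi(G) \le 2$ implies strong EFX-orientability.

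I do not expect a genuine obstacle here. The only points that need care are the initial reduction of EFX in the simple-graphical setting to ``every envied vertex is a singleton'' (which is what makes it harmless to dump all leftover edges onto the $B$-side), and the observation that the favorite edges of distinct $A$-vertices are automatically distinct, which is what makes the greedy assignment conflict-free without any matching argument.
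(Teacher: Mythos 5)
Your proof is correct and follows essentially the same route as the paper: let each vertex of one side $A$ take its favorite incident edge (automatically conflict-free since $A$ is independent), push all remaining edges to $B$, and observe that any residual envy points only at singleton bundles, which is harmless for EFX. The only difference is that you spell out the reduction of the EFX condition in the graphical setting more explicitly, which the paper leaves implicit.
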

    \begin{proof}
        Given a bipartite graph $G$, partition the vertices into two color classes $A$ and $B$. Let every vertex in $A$ pick their favorite edge, and orient the chosen edges towards $A$. Note that no two vertices in $A$ pick the same edge, since $A$ is an independent set. Orient all remaining edges towards $B$. No envy exists within an independent set, since any two vertices in an independent set value no common edges. For any vertex $a \in A$, every vertex in $B$ receives at most one edge adjacent to $a$, and since $a$ was allocated its favorite edge, $a$ does not envy any vertex in $B$. Note that while there might be envy from $B$ to $A$, since every vertex in $A$ has only edge, the envy disappears after removal of this edge. Hence, this allocation is EFX. 
    \end{proof}

It is possible to relax the bipartite condition slightly to obtain a slightly weaker sufficient condition for strong EFX-orientability. This relaxation gives us examples of graphs with chromatic number 3 that are also strongly EFX-orientable. 

\begin{lemma}\label{lem:suf2}
    If $G$ is a graph with some $v \in G$ such that for any edge $e$ incident to $v$, $G-e$ is bipartite, then $G$ is strongly EFX-orientable. 
\end{lemma}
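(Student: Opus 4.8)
The plan is to separate the bipartite case, already handled by Lemma~\ref{lem:suf1}, from the non-bipartite case, where the hypothesis pins down the structure of $G$ very tightly. So assume $G$ is not bipartite and fix an odd cycle $C$. For every edge $e$ incident to $v$ the graph $G-e$ is bipartite, hence has no odd cycle, so $C$ must contain $e$; since a cycle is incident to any vertex in at most two edges, $v$ has degree exactly $2$ (we may assume $v$ has an incident edge, else the hypothesis is vacuous). Write its incident edges as $e_1=vu_1$ and $e_2=vu_2$, with $u_1\neq u_2$ since $G$ is simple. Every odd cycle of $G$ passes through $v$, so $G-v$ has no odd cycle and is bipartite; moreover $u_1$ and $u_2$ lie in different classes of every proper $2$-coloring of $G-v$, since otherwise coloring $v$ with the remaining color would $2$-color $G$.

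Now fix an arbitrary monotone valuation. Without loss of generality $f_v(e_2)\ge f_v(e_1)$, so $v$ will keep its favorite edge $e_2$ and discard $e_1$; call $u_1$ the endpoint of the discarded edge. Two-color $G-v$, let $Q$ be the class containing $u_1$ and $P$ the class containing $u_2$ (these are different by the previous paragraph). I orient $e_2$ towards $v$ and $e_1$ towards $u_1$, and I orient the remaining edges -- which are precisely the edges of the bipartite graph $G-v$ with parts $P,Q$ -- by the procedure of Lemma~\ref{lem:suf1} applied with $P$ as the claiming side: each vertex $a\in P$ claims a favorite edge of $G-v$ incident to it (distinct $P$-vertices claim distinct edges, as $P$ is independent in $G-v$), the claimed edges are oriented towards $P$, and every other edge of $G-v$ is oriented towards $Q$.

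I would then verify EFX by cases, using that $P$ and $Q$ are both independent in $G$ (the only edges of $G$ outside $G-v$ are $e_1,e_2$, both incident to $v\notin P\cup Q$). The vertex $v$ receives the single good $e_2$, a favorite among all its goods, so $v$ envies no one, and $X_v$ being a singleton it cannot be EFX-envied. Each $a\in P$ receives exactly one good, its claimed favorite in $G-v$ (note $u_2$ does not receive $e_2$, which went to $v$); since $P$ is independent, $a$ shares no valued good with another $P$-vertex, and since $G$ is simple any $d$ holds at most one good incident to $a$, namely the edge $ad$. For $d\in Q$ this gives $f_a(X_a)\ge f_a(ad)\ge f_a(X_d)$, so $a$ does not envy $d$; the only other vertex $a$ could envy is $v$ (when $a=u_2$), which holds a singleton, so EFX holds. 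Finally a vertex $d\in Q$ -- including $u_1$, which also holds $e_1$ -- may accumulate several goods, but the only vertices valuing a good of $d$ are its $P$-neighbors (each valuing the corresponding $G-v$ edge) and, when $d=u_1$, the vertex $v$ (valuing $e_1$); every one of these holds its own favorite edge, so its own bundle is worth at least as much to it as $d$'s bundle, and none of them envies $d$. Any residual envy is therefore directed at $P\cup\{v\}$, whose members hold singletons, so it disappears after one removal; and $Q$ being independent, there is no envy inside $Q$. Hence the orientation is EFX.

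The step I expect to be the main obstacle -- and the heart of the argument -- is choosing which side of $G-v$ does the claiming: the endpoint $u_1$ of $v$'s discarded edge must be placed on the \emph{non}-claiming side $Q$. Were $u_1$ on the claiming side, it would hold two goods, its claimed favorite together with $e_1$, and a neighbor valuing the claimed favorite more than its own bundle would EFX-envy $u_1$ even after a single removal. Placing $u_1$ in $Q$ is safe exactly because all of its potential enviers -- its $P$-neighbors and $v$ -- hold their own favorite edges and hence never envy; this asymmetry between the two sides is available precisely because $u_1$ and $u_2$ are forced into different color classes of $G-v$.
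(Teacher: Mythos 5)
Your proof is correct and follows essentially the same route as the paper's: $v$ takes its favorite edge, the rest of the graph is properly two-colored, one color class claims singleton favorite edges while the other absorbs everything else, and all residual envy points at vertices holding singletons. The paper obtains your key placement of $u_1$ on the non-claiming side automatically by deleting only the edge $e_2$ (not the vertex) and putting $v$ itself in the claiming class, so the degree-$2$ analysis and the separation of $u_1,u_2$ are not needed; both arguments share the same implicit assumption that $v$ has at least one incident edge.
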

    \begin{proof}
        Let such a vertex $v$ pick its favorite edge, which we will call $uv$. Delete $uv$ from $G$. Note that the remaining graph is bipartite. Partition $G-uv$ into color classes $A$ and $B$, and without loss of generality let $v$ be in $A$. Have all vertices in $A-v$ pick their favorite edge, and orient the chosen edges towards $A$. Orient the remaining edges towards $B$.
        
        Note that $B$ must be an independent set in $G$, since $uv$ has one end in $A$, and all other edges go from $A$ to $B$, so no envy exists between vertices in $B$. No vertex in $A$ envies a vertex in $B$, since they got to take an edge before $B$, and the remaining edges incident to a vertex in $A$ were assigned to different neighbors of that vertex. While there might be envy towards vertices in $A$, every vertex in $A$ was assigned only one edge, making this orientation EFX. 
    \end{proof}

\begin{corollary}\label{cor:suf}
    Odd cycles with an additional edge are strongly EFX-orientable. 
\end{corollary}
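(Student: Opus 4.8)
The plan is to apply Lemma~\ref{lem:suf2} directly. Let $G$ be an odd cycle $C$ together with one extra edge $f$; I will show there is a vertex $v$ such that deleting any edge incident to $v$ leaves a bipartite graph.

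First I would observe that an odd cycle plus an edge $f$ has exactly two cycles sharing the path structure, and in particular its cycle space is small. Concretely, write the odd cycle as $v_0 v_1 \cdots v_{2k} v_0$ and let $f = v_i v_j$ be the chord (with $i < j$); the case where $f$ connects a cycle vertex to a brand-new pendant vertex is trivial since then $G$ is already bipartite only if $C$ were even, so I should treat the pendant case separately — actually a pendant edge attached to an odd cycle is still non-bipartite, but then $v :=$ the pendant vertex's neighbor works, or more simply any edge of the odd cycle, when deleted, makes it a path plus pendant, which is bipartite; so the interesting case is when $f$ is a genuine chord. The chord $f = v_i v_j$ splits $C$ into two arcs, one of length $\ell$ and one of length $2k+1-\ell$; since the total length is odd, exactly one arc is even and one is odd. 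Together with $f$, these form two cycles: one of length $\ell + 1$ and one of length $2k+2-\ell$. Exactly one of these two cycles is odd (their lengths sum to $2k+3$, which is odd), so $G$ contains exactly one odd cycle, call it $C_{\mathrm{odd}}$, up to symmetric difference — more carefully, every odd cycle in $G$ must use an odd number of the three ``segments'' (the two arcs and $f$), and a short case check shows the odd cycles of $G$ are precisely $C_{\mathrm{odd}}$ and possibly the original $C$.

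The key step is then to exhibit $v$ lying on \emph{all} odd cycles of $G$ such that, moreover, $v$ is incident to an edge on each odd cycle whose removal destroys that cycle. A natural candidate is an endpoint of the chord, say $v = v_i$: every odd cycle of $G$ passes through $v_i$ (both $C$ and $C_{\mathrm{odd}}$ do, since $C_{\mathrm{odd}}$ uses $f$ which is incident to $v_i$, and $C$ is the full cycle). Now I need: for every edge $e$ incident to $v_i$, $G - e$ is bipartite, i.e.\ $G - e$ has no odd cycle. Vertex $v_i$ is incident to exactly three edges: $v_{i-1} v_i$, $v_i v_{i+1}$ (on the cycle), and $f = v_i v_j$. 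Deleting $f$ leaves the original odd cycle $C$, which is \emph{not} bipartite — so $v = v_i$ fails, and I must instead pick a vertex incident to edges each of which, individually, lies on every odd cycle. The right choice is a vertex $v$ that is on the chord and on $C$ but positioned so that $C$ and $C_{\mathrm{odd}}$ share all three of $v$'s incident edges — impossible for a degree-$3$ vertex. So instead I would pick $v$ to be a degree-$2$ vertex of $C$ lying on the unique even arc determined by $f$: say $v = v_m$ with $i < m < j$ where the arc $v_i \cdots v_j$ through $v_m$ is the \emph{odd}-length arc (so that this arc together with $f$ is the even cycle, meaning removing $v$'s cycle-edge $v_{m-1} v_m$ breaks $C$ and also breaks $C_{\mathrm{odd}}$ only if $C_{\mathrm{odd}}$ uses that arc — which it does not). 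Hmm — the cleanest route: every odd cycle in $G$ contains the chord-arc that is odd; pick $v$ an internal vertex of that odd arc. Then $v$ has degree $2$, both its edges lie on that arc, hence on every odd cycle, so deleting either edge kills all odd cycles and $G - e$ is a tree-like bipartite graph.

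Thus the main obstacle is the bookkeeping of which arc is odd and verifying that every odd cycle of $G$ must traverse that odd arc in full (so that any single edge of it is a ``bridge'' for odd-cycle-ness); once that is pinned down, Lemma~\ref{lem:suf2} applies with $v$ any internal vertex of the odd arc, giving strong EFX-orientability. I would also dispatch the degenerate cases (the extra edge is a loop, a multi-edge doubling a cycle edge, or a pendant edge) at the start: loops and parallel edges are excluded by the simple-graph convention or handled trivially, and a pendant edge added to an odd cycle is handled by taking $v$ to be any internal vertex of the odd cycle itself, since deleting a cycle-edge at $v$ leaves a path with a pendant, which is bipartite. I expect the verification that the chosen $v$ meets the hypothesis of Lemma~\ref{lem:suf2} for \emph{all three} (or two) incident edges to be the only place needing care.
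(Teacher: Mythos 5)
Your proposal is correct and, after the false starts, lands on exactly the paper's argument: the chord turns the odd cycle into a theta graph whose two odd cycles both contain the full odd arc, so any internal (degree-$2$) vertex of that odd arc satisfies the hypothesis of Lemma~\ref{lem:suf2}, since deleting either of its edges destroys all odd cycles. The paper phrases this as choosing a vertex in the smaller odd cycle but not in the smaller even cycle, which is the same choice.
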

    \begin{proof}
        Adding an additional edge to an odd cycle splits the graph into a smaller even cycle and a smaller odd cycle joined by an edge. We can take a vertex $v$ in the smaller odd cycle but not in the smaller even cycle. Deleting any edge incident to $v$ removes the graph of all its odd cycles, hence by Lemma \ref{lem:suf2}, an odd cycle with a chord is strongly EFX-orientable. 
    \end{proof}

We conclude this subsection with a remark that when dealing with strongly EFX-orientable graphs, if needed, we can always assume that a graph has minimum degree $\delta(G) \geq 2$, as vertices with only one incident edge do not affect strong EFX-orientability.  
\begin{proposition}\label{prop:suff}
    Let $G$ be a graph with a vertex $v$ that has degree 1. $G-v$ is strongly EFX-orientable if and only if $G$ is strongly EFX orientable. 
\end{proposition}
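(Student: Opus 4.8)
The plan is to prove the two directions separately, noting that the ``only if'' direction is essentially immediate by monotonicity of the property under taking vertex-deleted subgraphs, while the ``if'' direction requires showing that a degree-1 vertex can always be appended to an EFX orientation of $G-v$ without destroying the EFX guarantee.

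For the ``only if'' direction: suppose $G$ is strongly EFX-orientable, and let $\mathcal{F}'$ be any monotone valuation on $G-v$. Extend $\mathcal{F}'$ to a valuation $\mathcal{F}$ on $G$ by assigning the unique edge $e=uv$ incident to $v$ a value of $0$ for both endpoints (and leaving all other valuations unchanged; here I should note that $f_u$ must be extended so that $f_u(X) = f_u(X \setminus \{e\})$, which is consistent with $e$ being valued $0$). Since $G$ is strongly EFX-orientable, there is an EFX orientation of $G$ under $\mathcal{F}$; restricting this orientation to $G-v$ gives an orientation of $G-v$, and since removing the worthless edge $e$ from any bundle does not change any agent's valuation of that bundle, the restricted orientation remains EFX under $\mathcal{F}'$. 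Hence $G-v$ is strongly EFX-orientable.

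For the ``if'' direction: suppose $G-v$ is strongly EFX-orientable, and let $\mathcal{F}$ be any monotone valuation on $G$, with $e=uv$ the unique edge at $v$. Restrict $\mathcal{F}$ to $G-v$ and obtain an EFX orientation of $G-v$. Now I must orient $e$. If $u$ does not envy anyone in the EFX orientation of $G-v$ after adding $e$ to $u$'s bundle is risky, so instead the natural move is: orient $e$ toward $v$ if $v$ values $e$ at least as much as $u$ does, i.e. $f_v(e) \ge f_u(e)$, and toward $u$ otherwise. In the first case $v$ receives only $e$, which it values at least as highly as $u$ values it; $v$ therefore does not envy $u$ (since $u$'s bundle minus any good has value at most $f_u(e) \le f_v(e)$ to... wait, $v$ only values $e$, so $v$'s value for $u$'s bundle is $f_v(e)$ if $u$ holds $e$, which it does not). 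Actually the cleaner argument: $v$ values only the single edge $e$; if $e$ is oriented to $v$, then $v$ has its unique valued good and envies no one; and no agent's bundle changed except possibly... no other agent values $e$ except $u$, and $u$ lost nothing. So $u$'s bundle is unchanged, meaning no prior envy relation toward $u$ worsens, and $v$ is now a non-envious sink. In the second case, $e$ goes to $u$: then $v$ gets nothing, $v$ envies $u$, but $v$'s only valued edge is $e$, so $f_v(u\text{'s bundle} - e) = 0 = f_v(v\text{'s bundle})$, so EFX toward $v$ holds; and adding $e$ (which we are free to do since $f_v(e) < f_u(e)$... actually we need that adding a good $u$ values does not create EFX violations \emph{from} other agents toward $u$).

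The main obstacle is precisely this last point: adding the edge $e$ to $u$'s bundle might cause some other agent $w$ (a neighbor of $u$ that values some of $u$'s edges) to newly EFX-envy $u$. The resolution is to choose the orientation of $e$ based on whether $u$ is currently envied: if some agent $w$ EFX-envies $u$ already, or would after receiving $e$, we should instead orient $e$ toward $v$ regardless — this is safe because, as argued, giving $v$ its sole valued edge makes $v$ content and changes no one else's bundle, so it can never break EFX. Only when $u$ is ``safely un-envied'' (every agent's value for $u$'s bundle is already at most their own value, with enough slack that one more good, removable under EFX, is fine — in fact EFX only requires the bundle \emph{minus one good} be non-enviable, and $e$ can be designated as that removable good) do we orient $e$ toward $u$. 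So the correct rule is: orient $e$ toward $v$ \emph{unless} doing so would leave $v$ envying $u$ in violation of EFX, which happens only if $f_u(e) > f_v(e)$ \emph{and} ... hmm, even then $v$'s bundle-minus-$e$... Let me just say: orient $e$ to $v$ if $f_v(e) \ge f_u(e)$; otherwise orient to $u$, and observe that since $e$ is a pendant edge, for \emph{any} agent $w \ne v$, $w$'s EFX-value for $u$'s new bundle is $f_w(u\text{'s old bundle} \cup \{e\} - g)$ maximized over $g$; taking $g = e$ recovers exactly the old EFX condition, which held. Thus EFX is preserved, completing the proof. I expect writing the ``otherwise'' case cleanly — making explicit that $e$ itself serves as the removable good witnessing EFX for every potential envier of $u$ — to be the only subtlety, and it is genuinely short.
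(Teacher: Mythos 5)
Your ``only if'' direction is correct and in fact more careful than the paper's (the paper simply asserts it); extending the valuation by a zero-valued pendant edge and restricting the resulting EFX orientation works. The ``if'' direction, however, has a genuine gap in the case where you orient $e=uv$ toward $u$. Your justification --- ``EFX only requires the bundle minus one good be non-enviable, and $e$ can be designated as that removable good'' --- misreads the definition: EFX requires $f_w(X_w)\ge f_w(X_u-\{g\})$ for \emph{every} $g\in X_u$, not for some chosen $g$ (that weaker statement is EF1). After giving $e$ to $u$, the binding choice is exactly $g=e$: for $w\notin\{u,v\}$ it yields $f_w(X_w)\ge f_w(X_u^{\mathrm{old}})$, i.e.\ full no-envy toward $u$'s old bundle, which an EFX orientation of $G-v$ does not guarantee. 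Concretely, take the path $w\,u\,v$ with $f_w(wu)=f_u(wu)=1$, $f_u(uv)=1$, $f_v(uv)=0$, all additive. Your rule sends $uv$ to $u$ since $f_u(uv)>f_v(uv)$; the single-edge graph $G-v$ has an EFX orientation giving $wu$ to $u$; but then $u$ holds $\{wu,uv\}$ and $f_w(\emptyset)=0<1=f_w(\{wu,uv\}-\{uv\})$, so EFX fails.

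The repair is the option you raised and then abandoned: \emph{always} orient $e$ toward $v$, which is the paper's entire proof. Your stated reason for not doing so when $f_u(e)>f_v(e)$ --- that $v$ might then envy $u$ --- cannot occur: $v$ values only $e$, so once $v$ holds $e$ we have $f_v(X_j)=f_v(X_j\cap\{e\})=f_v(\emptyset)\le f_v(\{e\})$ for every $j\ne v$, and $v$ envies no one regardless of how $f_u(e)$ compares to $f_v(e)$. Since $v$ receives exactly one item, any envy toward $v$ is EFX-acceptable, and no other bundle changes, so the extension always succeeds.
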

    \begin{proof}
        $G$ being strongly EFX orientable implies that $G-v$ is strongly EFX orientable, so we prove the other direction. Given any valuation assignment on $G$, orient the edge $e$ incident to $v$ towards $v$, and orient $G-v$ such that the orientation is EFX on $G-v$. Note that while the neighbor of $v$ could envy $v$, the orientation is still EFX, as $v$ only has one item.  
    \end{proof}

\section{General Valuations: Forbidden Structures}\label{sec:fs}

To prove that our upper bound of $\chi(G) \leq 3$ is sharp, we give various examples of graphs with chromatic number 3 which are not strongly EFX-orientable. We also show that the bipartite condition for sufficiency can be quite brittle, as adding an edge between two vertices of the same partite set for many bipartite graphs may cause the graph to no longer be strongly EFX-orientable. 

\begin{proposition}\label{prop:fs1}
    Two triangles glued together at a vertex, or connected by a path of length 1 or 2 are not strongly EFX-orientable. That is, graphs depicted in Figures \ref{fig:test} and \ref{fig:1} are not strongly EFX-orientable.

\begin{figure}[h]
\centering
\begin{subfigure}{.5\textwidth}
  \centering
  \includegraphics[width=.55\textwidth]{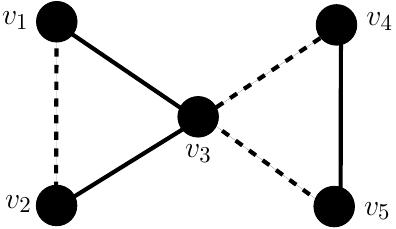}
  \label{fig:sub1}
\end{subfigure}%
\begin{subfigure}{.5\textwidth}
  \centering
  \includegraphics[width=.7\textwidth]{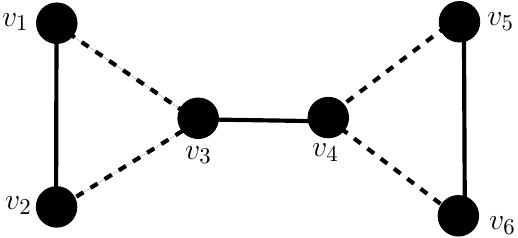}
  \label{fig:sub2}
\end{subfigure}
\caption{}
\label{fig:test}
\end{figure}

    \begin{figure}[h]
        \centering
        \includegraphics[width=0.5\textwidth]{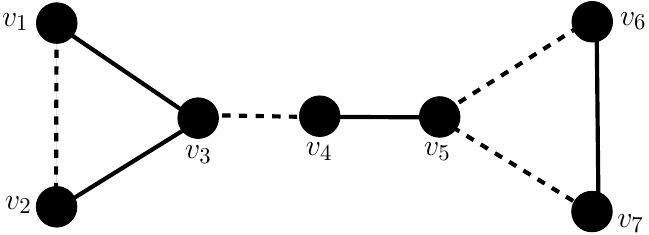}
        \caption{}
        \label{fig:1}
    \end{figure}
\end{proposition}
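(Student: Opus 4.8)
The plan is to exhibit, for each of the three graphs in question (two triangles sharing a vertex; two triangles joined by an edge; two triangles joined by a path of length $2$), an explicit monotone—indeed $0$-$1$ additive—valuation for which no orientation is EFX. By Lemma~\ref{lem:01char} it suffices to find a forest $H$ in each graph whose trees $T_1,\dots,T_k$ have the property that for every choice of $x_i\in T_i$ the union $\bigcup_i N_H(x_i)$ fails to be independent in $G$; equivalently (and more concretely), I would argue directly that in any orientation the two triangles each force a source, these sources have their weight-$1$ edges pointing at their triangle-neighbors, and one of the weight-$0$ edges then lies between two envied vertices, breaking EFX.

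The cleanest route is to use Corollary~\ref{cor:01nec}: a $0$-$1$ strongly EFX-orientable graph must have, for every matching $M$, an independent set of size $|M|$ among $V(M)$. For two triangles sharing a vertex $w$, with triangles $w a_1 a_2$ and $w b_1 b_2$, take the matching $M=\{a_1a_2,\ b_1b_2\}$; the induced subgraph on $\{a_1,a_2,b_1,b_2\}$ contains the two edges $a_1a_2$ and $b_1b_2$, and since $a_1,a_2$ are adjacent and $b_1,b_2$ are adjacent, any independent set picks at most one of $\{a_1,a_2\}$ and at most one of $\{b_1,b_2\}$, so it has size at most $2$ — wait, that gives size $2=|M|$, so this particular corollary is not immediately violated. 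Hence for the shared-vertex and length-$1$ cases I instead go back to Lemma~\ref{lem:01char} directly: choose $H$ to be the two triangles minus one edge each, i.e.\ $H$ consists of two paths $a_1 w a_2$-type trees (or, for the edge-joined case, $H$ the whole graph minus two edges so it is a forest of two paths), and check that whichever endpoint $x_i$ of each path one designates as root, its $H$-neighbors always include two vertices joined by a removed (weight-$0$) edge inside a triangle. For the length-$2$ case one can apply Lemma~\ref{lem:subdiv}: two triangles joined by an edge, if that graph violated the matching condition, would pass the violation to the length-$2$ (twice-subdivided — careful, subdividing once gives length $2$, so actually I should either re-examine whether Lemma~\ref{lem:subdiv} applies or handle it by hand) connector; since Lemma~\ref{lem:subdiv} is stated for subdividing \emph{twice}, the safest is to treat all three graphs by the direct forest argument rather than leaning on subdivision.

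Concretely, the key steps in order: (1) fix notation for each of the three graphs and describe the $0$-$1$ valuation: all triangle edges get value $1$ at both endpoints, all connecting/path edges get value $0$ at both endpoints; (2) observe each triangle, under any orientation, has a vertex (a source within that triangle) receiving none of its two incident weight-$1$ edges, hence envying both of its triangle-neighbors; (3) conclude that in each triangle at least two of its three vertices are envied, and check by case analysis on the orientation that the set of envied vertices always contains both endpoints of some weight-$0$ edge — for the shared-vertex graph the weight-$0$ edges are exactly the connecting structure, so I must verify that the envied sets on the two sides cannot be made to avoid the connector's endpoints simultaneously; (4) note that a weight-$0$ edge between two envied vertices is allocated to one of them, who then owns an item a neighbor values at $1>0=$ (value after removing that item? no — the envied vertex values its own bundle at $0$ and the neighbor's at $\ge 1$ even after removing a good), breaking EFX.

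The main obstacle I anticipate is step~(3): unlike the clean $K_4$ argument, here the two triangles interact only weakly (through a shared vertex or a short zero-weight path), so one must rule out orientations where the "source" of each triangle is chosen to be the vertex whose forced envy is confined to vertices \emph{not} incident to any weight-$0$ edge. In the shared-vertex case, if $w$ is the source of both triangles it envies all four of $a_1,a_2,b_1,b_2$; if $w$ is a source of neither, then on side $A$ the source is $a_1$ or $a_2$ and it envies $w$ — so $w$ is always envied, and then symmetrically the other side produces a second envied vertex adjacent to something; making this air-tight (in particular handling the length-$1$ and length-$2$ connectors, where the connecting vertices can absorb envy) is the delicate part, and I expect it to require a short but careful enumeration of which vertex is the source in each triangle, for which I would lean on the figures for bookkeeping.
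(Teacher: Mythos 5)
Your high-level strategy (reduce to $0$-$1$ valuations and invoke Lemma~\ref{lem:01char} or Corollary~\ref{cor:01nec}) is the paper's strategy, and you correctly notice that the matching corollary alone does not catch the shared-vertex graph. But the concrete instantiation has a genuine gap, centred on your step~(2). You propose giving \emph{all} triangle edges value $1$ and only the connecting edges value $0$, and then claim that each triangle forces a ``source'' that receives neither of its weight-$1$ edges. That claim is false: a triangle whose three edges all have weight $1$ can be oriented cyclically, so that every vertex receives one valued edge and no vertex is a source. Indeed, for all three graphs your valuation admits an EFX orientation (orient each triangle as a directed cycle, then push the weight-$0$ connector edges anywhere), so it cannot witness non-orientability. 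The source argument in the paper's Lemma~\ref{lem:01char} applies to \emph{trees}, which is exactly why the correct bad valuations must zero out enough triangle edges that the weight-$1$ subgraph is a forest. Your fallback forests have the same problem: for the shared-vertex graph, ``each triangle minus one edge'' (removing the edge opposite the shared vertex $w$ from both) leaves a single star centred at $w$, and picking a leaf as the root gives a singleton neighbourhood $\{w\}$, which is independent --- no violation; and for the edge-joined graph, ``the whole graph minus two edges'' has $5$ edges on $6$ vertices and is not a forest of two trees.

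What actually works, and what the paper does: for the triangles joined by an edge, take the perfect matching of size $3$ (one edge from each triangle plus the connector); its vertex set is all of $V(G)$, whose largest independent set has size $2<3$, violating Corollary~\ref{cor:01nec}. For the shared-vertex graph (triangles $v_1v_2v_3$ and $v_3v_4v_5$ glued at $v_3$), the forest must be \emph{asymmetric}: $T_1=v_1v_3v_2$ (the two edges at $v_3$ in one triangle) and $T_2=v_4v_5$ (the edge opposite $v_3$ in the other). Then $x_1\ne v_3$ since $v_1v_2\in E(G)$, so $v_3\in N_H(x_1)$; and $N_H(x_2)\in\{v_4,v_5\}$, both adjacent to $v_3$. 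The length-$2$ case extends this with a third tree $T_3$ equal to one edge of the connecting path. Your proposal is missing this asymmetric choice and, more fundamentally, the observation that triangles with all edges valued do not force sources; as written, the argument would not go through.
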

    \begin{proof}
    All three of these examples are not 0-1 strongly EFX orientable and hence not strongly EFX orientable. The figures above give an example of which edges to assign a value of 1 (solid lines) and which edges to assign a value of 0 to (dashed lines) for a graph that is not strongly EFX-orientable. Note that in the case of two triangles joined by an edge, a matching of size 3 exists, but the maximum independent set is at most 2, since if 3 vertices are selected two of them will be from the same triangle. Hence, by Corollary \ref{cor:01nec}, this graph is not 0-1 strongly EFX orientable. 

    Consider the forest marked by the solid lines on the graph of two triangles glued by a vertex. It consists of two trees, $T_1=v_1v_3v_2$ and $T_2=v_4v_5$. We claim that it is impossible to select an $x_1 \in T_1$ and $x_2 \in T_2$ such that their neighbors form an independent set on the entire graph. Indeed, if this would be possible, $v_3$ could not be selected as $x_1$, since $v_1$ and $v_2$ are adjacent. Hence, either $v_1$ or $v_2$ needs to be selected, so $v_3$ will always be a neighbor of $x_1$. However, this makes selecting $x_2$ impossible, since $v_3$ is adjacent to both $v_4$ and $v_5$, hence this graph is not 0-1 strongly EFX orientable. 

    Finally, the graph of two triangles connected by a path of length 2 has a forest selected that consists of three trees, $T_1=v_1v_3v_2$, $T_2=v_4v_5$, and $T_3=v_6v_7$. Similar to the previous case, to select $x_1,x_2,x_3$ such that their neighborhoods in the forest form an independent set, $x_1$ cannot equal $v_3$ and must be either $v_1$ or $v_2$. This means that $x_2$ cannot equal $v_4$, since $v_3$ and $v_4$ are adjacent, so $x_2=v_5$. But both $v_6$ and $v_7$ are adjacent to $v_5$, so selecting $x_3$ is impossible, meaning that this graph is not 0-1 strongly EFX orientable as well. 
    \end{proof}

To separate the notions of 0-1 strong EFX orientability and strong EFX orientability in general, we provide an example of a graph that is 0-1 strongly EFX-orientable but not strongly EFX-orientable. 
\begin{proposition}\label{prop:fs2}
    Let $G$ consist of two triangles sharing one edge, as shown in Figure \ref{fig:3}. Then $G$ is not strongly EFX-orientable. However, $G$ is 0-1 strongly EFX orientable. 
    \begin{figure}[h]
        \centering
        \includegraphics[width=0.25\textwidth]{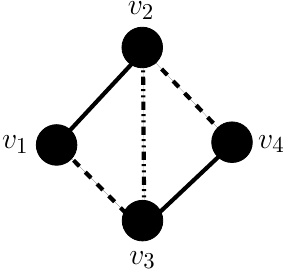}
        \caption{}
        \label{fig:3}
    \end{figure}
\end{proposition}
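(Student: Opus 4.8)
Label $V(G)=\{v_1,v_2,v_3,v_4\}$ so that $v_1v_2$ is the shared edge and the triangles are on $\{v_1,v_2,v_3\}$ and $\{v_1,v_2,v_4\}$; then $E(G)=\{v_1v_2,v_1v_3,v_2v_3,v_1v_4,v_2v_4\}$, $\deg(v_1)=\deg(v_2)=3$, $\deg(v_3)=\deg(v_4)=2$, and the independent sets of $G$ are exactly $\varnothing$, the singletons, and $\{v_3,v_4\}$. For 0-1 strong EFX-orientability I will verify the criterion of Lemma~\ref{lem:01char}. Given a forest $H\subseteq G$ with trees $T_1,\dots,T_k$: since $|V(G)|=4$, either at most one $T_i$ has $\ge 2$ vertices, or $H$ is a disjoint union of two edges. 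In the first case I root the nontrivial tree (if any) at a leaf and each other $T_i$ at its unique vertex, so $\bigcup_i N_H(x_i)$ is empty or a single vertex. In the second case the only vertex-disjoint edge pairs in $G$ are $\{v_1v_3,v_2v_4\}$ and $\{v_1v_4,v_2v_3\}$, and the roots $v_1,v_2$ give $\bigcup_i N_H(x_i)=\{v_3,v_4\}$; either way this union is independent, so $G$ is 0-1 strongly EFX-orientable.

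For the failure of strong EFX-orientability I will exhibit an additive valuation admitting no EFX orientation. The key observation, valid in any simple graph, is that an orientation is EFX iff for every vertex $j$ of in-degree $\ge 2$ and every in-neighbor $i$ of $j$ one has $f_i(X_i)\ge f_i(\{v_iv_j\})$: since $i$ and $j$ share at most one edge, the only EFX inequality from $i$ towards $j$ that can fail is this one, and it is vacuous unless $v_iv_j$ enters $j$ with $|X_j|\ge 2$. I take (additive) $f_{v_1}\colon v_1v_2\mapsto 2,\ v_1v_3\mapsto 1,\ v_1v_4\mapsto 4$ and, symmetrically under $v_1\leftrightarrow v_2,\ v_3\leftrightarrow v_4$, $f_{v_2}\colon v_1v_2\mapsto 2,\ v_2v_4\mapsto 1,\ v_2v_3\mapsto 4$, $f_{v_3}\colon v_1v_3\mapsto 1,\ v_2v_3\mapsto 2$, and $f_{v_4}\colon v_2v_4\mapsto 1,\ v_1v_4\mapsto 2$. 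Using the criterion I then argue: (i) since $v_3,v_4$ value each incident edge positively and $|E(G)|=5$, any EFX orientation gives $v_3$ and $v_4$ in-degree $\ge 1$ (if $v_3$ had in-degree $0$, its two edges enter $v_1$ and $v_2$, forcing both to in-degree $\le 1$ and hence leaving $\ge 3$ edges for $v_4$, impossible); and (ii) since $f_{v_1}(\{v_1v_4\})=4>3=f_{v_1}(\{v_1v_2,v_1v_3\})$ and its mirror holds for $v_2$, any EFX orientation also gives $v_3,v_4$ in-degree $\le 1$.

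Consequently every EFX orientation has in-degree exactly $1$ at $v_3$ and at $v_4$, which leaves only $2^3=8$ orientations (which of $v_3$'s edges enters $v_3$, likewise for $v_4$, and where $v_1v_2$ goes); by the symmetry it suffices to rule out the four in which $v_1$ has in-degree $\ge 2$. In one of these $v_2$ is empty while $v_1v_2$ lies in a size-$3$ bundle, an immediate violation; in each of the other three, $v_1$ is the only vertex with a size-$2$ bundle and some in-neighbor of $v_1$ is forced to hold a singleton worth strictly less than the corresponding edge of $X_{v_1}$ (this uses $f_{v_1}(\{v_1v_2\})>f_{v_1}(\{v_1v_3\})$, $f_{v_4}(\{v_1v_4\})>f_{v_4}(\{v_2v_4\})$, and their mirrors). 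Hence no orientation is EFX. I expect the main obstacle to be exactly this last finite check: a valuation symmetric between the two outer edges at $v_3$ (and at $v_4$) leaves a feasible ``rotational'' orientation, so the symmetry must be broken at $v_3$ and $v_4$ in opposite directions, and one then has to confirm that this destroys every orientation without reviving another.
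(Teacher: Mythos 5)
Your proof is correct. The 0-1 half follows the same route as the paper: both verify the criterion of Lemma~\ref{lem:01char} by observing that on four vertices a forest either has at most one nontrivial tree (root it at a leaf) or is one of the two perfect matchings (root both edges at the degree-3 vertices, yielding the unique independent pair of degree-2 vertices). For the impossibility half you take a genuinely different route. The paper puts weight $0.5$ on the shared edge, weight $1$ on two opposite outer edges, and weight $0$ on the rest; a short forced chain of deductions then strands a zero-valued edge between two envied single-item vertices. You instead use an all-positive valuation, asymmetric between the two outer edges at each degree-2 vertex, with the large value $4$ chosen to cap the in-degree of those vertices at one; after also bounding their in-degree below, you finish with an exhaustive check of eight orientations cut to four by the graph automorphism. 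I verified your in-degree arguments and all four residual cases (your stated inequalities $f_{v_4}(\{v_1v_4\})>f_{v_4}(\{v_2v_4\})$ and the mirror of $f_{v_1}(\{v_1v_2\})>f_{v_1}(\{v_1v_3\})$ do kill the three cases where $X_{v_1}$ has size two, and the size-three case is immediate); your reduction of EFX to the single inequality $f_i(X_i)\ge f_i(\{v_iv_j\})$ at vertices of in-degree at least two is also valid for graphical valuations on simple graphs. The paper's argument is shorter, and its choice of a valuation with zero-valued edges is load-bearing later: Lemma~\ref{lem:fs1} and Corollary~\ref{cor:fs1} subdivide exactly those zero-valued edges to extend the example to larger odd cycles sharing an edge, which your all-positive valuation would not support. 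Your approach buys a more mechanical, fully enumerated verification that is easy to audit.
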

    \begin{proof}
        Let the first triangle be $v_1v_2v_3$, and the second triangle be $v_2v_3v_4$. Let $v_2v_3$ have weight 0.5 for both endpoints, $v_1v_2$ and $v_3v_4$ have weight 1 for both endpoints, and the rest of the edges have weight 0 for both endpoints. All valuations are additive. Due to the symmetry of the graph, without loss of generality, we can assume that $v_2v_3$ is directed towards $v_2$. For the orientation to be EFX, $v_2$ cannot receive $v_1v_2$, as $v_1$ will envy $v_2$ even though $v_2$ has already received $v_2v_3$. Hence $v_1$ must receive $v_1v_2$, but $v_1$ will be envied by $v_2$, so $v_1v_3$ must be directed towards $v_3$. Since $v_3$ already has an edge, for similar reasons, $v_3$ cannot receive $v_3v_4$, so $v_3v_4$ is directed towards $v_4$. Note that $v_2$ and $v_4$ are both envied by $v_3$, but there is another edge between $v_2$ and $v_4$ that needs to be assigned, making an EFX-orientation impossible. \\
        We now prove that this graph is 0-1 strongly EFX-orientable. We wish to prove that for any forest $H=T_1,T_2...T_k$ on this graph, we can find a collection of vertices $x_1 \in T_1, x_2 \in T_2... x_k \in T_k$ such that $\bigcup_{i=1}^k N_H(x_i)$ is an independent set on $G$. We can assume, without loss of generality, that all trees in $H$ have at least one edge, because if a tree $T_i$ is a single vertex, its neighborhood in $H$ is the empty set, and selecting it as $x_i$ contributes nothing to $\bigcup_{i=1}^k N_H(x_i)$. Observe that the only way to select two trees which both have at least one edge on this graph is a perfect matching, which we can select an independent set of size 2 from. Any other forest where all the trees have edges will have at most one tree, and we can select a leaf from the tree as $x_1$ - its single neighbor in the forest will trivially form an independent set on $G$. Hence this graph satisfies the condition in Lemma \ref{lem:01char} and is 0-1 strongly EFX orientable. 
    \end{proof}

In order to generate more forbidden subgraphs, we need a lemma similar to Lemma \ref{lem:subdiv} that allows us to subdivide certain edges of graphs that are not strongly EFX-orientable to create a new graph that is not strongly EFX-orientable. To help us do this, we introduce the idea of an item having \emph{zero value} to an agent. Formally, if $M$ is the set of a goods, an item $m \in M$ has \emph{zero value} to agent $i$ if for all $X \subseteq M$, $f_i(X-\{m\}) = f_i(X)$. 

\begin{lemma}\label{lem:fs1}
    Let $G$ be a graph, and suppose there exists a valuation on $G$ that is not EFX-orientable, and there exists an edge $e$ in the valuation that has zero value for both endpoints. Construct $G'$ by replacing $e$ with any path of odd length. Then $G'$ is not strongly EFX-orientable.  
\end{lemma}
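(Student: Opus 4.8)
The plan is to take the ``bad'' valuation on $G$ that admits no EFX orientation, and transport it to $G'$ by putting almost all the new weight on a single edge of the subdivision path, giving every other new edge zero value to both its endpoints. Concretely, write the subdivided edge $e = uv$ in $G$ as the path $u = w_0, w_1, \dots, w_{2\ell+1} = v$ in $G'$ (odd length, so an even number of internal vertices if we count endpoints, i.e. $2\ell$ internal vertices). Keep the valuation on all edges of $G' \setminus \{$path$\}$ identical to the valuation on $G \setminus \{e\}$. On the path, I would pick one designated edge, say $w_\ell w_{\ell+1}$, and assign it a valuation mimicking what $e$ needed — but since $e$ had \emph{zero value} to both endpoints $u$ and $v$ in the original bad instance, the cleanest choice is to give \emph{every} edge of the path zero value to both its endpoints. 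Then every internal vertex $w_1, \dots, w_{2\ell}$ has all-zero incident edges, so it is never envied no matter how the path is oriented, and the orientation restricted to $G \setminus \{e\}$ must still be EFX-violating by the same argument that worked on $G$.

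The key steps, in order: (1) Set up the valuation on $G'$ as above and verify monotonicity/additivity is inherited (routine). (2) Suppose for contradiction $G'$ has an EFX orientation; restrict it to the edges of $G' \setminus \{$path$\}$, which are in bijection with $E(G) \setminus \{e\}$. (3) Orient $e$ in $G$ in whichever direction is ``consistent'' with the path orientation — here I would argue that since $e$ has zero value to both $u$ and $v$, \emph{any} orientation of $e$ is fine, so orient it arbitrarily (say towards $u$). Now I have an orientation of all of $G$. (4) Show this orientation of $G$ is EFX, contradicting the choice of the bad valuation on $G$. For this I must check: no vertex of $G$ other than $u, v$ is affected, since their incident edges and the items on them are untouched; and $u, v$ themselves gained/lost only the edge $e$, which has zero value to them, so their envy relations toward others are unchanged, and others' envy toward $u$ (resp.\ $v$) can only have \emph{decreased} (we removed the path edges that $u$ had, and in $G$ there is at most the single zero-value edge $e$) — here I need the ``zero value'' hypothesis to ensure adding $e$ to $u$'s bundle cannot create a new EFX violation against $u$.

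The main obstacle is step (4), specifically controlling envy \emph{towards} the two subdivision endpoints $u$ and $v$. When we collapse the path back to the single edge $e$, the vertex $u$ loses whatever path-edges it was holding in $G'$ and gains $e$; I need that this does not turn a non-envied (or EFX-satisfied-envied) vertex into an EFX-violated one. Because $e$ has zero value to $u$, for any other agent $i$, $f_i$ doesn't even see $e$ unless $i$ is incident to $e$ — but the only vertices incident to $e$ in $G$ are $u$ and $v$, and $f_u(e) = f_v(e) = 0$ by the zero-value assumption applied symmetrically (note the lemma requires $e$ to have zero value for \emph{both} endpoints). So no agent's valuation of $u$'s bundle changes by more than the value of $e$, which is zero for the only agents who can see it; hence EFX toward $u$ (and $v$) is preserved. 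The one subtlety to nail down is that in $G'$ the path edges might have been distributed among $u$ and the $w_j$'s in a way that, when we ``give them back'' as just $e$, we don't need to re-examine the internal vertices at all — and indeed we don't, since they hold only zero-value items and so are universally non-envied in both graphs. I expect the write-up to hinge on cleanly stating ``an agent is never envied if every item in its bundle has zero value to every other agent who can see it,'' and then observing the internal path vertices satisfy this in $G'$ while $u, v$ are handled by the zero-value hypothesis on $e$.
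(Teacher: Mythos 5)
Your construction does not witness non-orientability of $G'$: giving \emph{every} path edge zero value for both endpoints is too weak. The obstruction in the original instance on $G$ is that the zero-value edge $e$ must be handed to one of $u,v$, and (in the bad valuation) both of those hand-offs break EFX. After your subdivision, an orientation of $G'$ can simply dump every path edge onto an internal vertex $w_j$; those vertices hold only worthless items and are never envied, while $u$ and $v$ keep exactly the bundles they would have in $G-e$, so the forced bad hand-off never happens. Concretely, take the instance of Proposition \ref{prop:fs2} (two triangles $v_1v_2v_3$, $v_2v_3v_4$ sharing the edge $v_2v_3$, with $v_2v_4$ worth zero to both ends) and subdivide $e=v_2v_4$ into $v_2,a,b,v_4$ with all three new edges worth zero: orienting $v_2a\to a$, $ab\to b$, $bv_4\to b$ and then $v_2v_3\to v_2$, $v_1v_2\to v_1$, $v_1v_3\to v_3$, $v_3v_4\to v_4$ is EFX. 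So $G'$ with your valuation is orientable even though $G$ with the original valuation is not, and the lemma is not proved.

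The error surfaces in your steps (3)--(4): ``since $e$ has zero value to both $u$ and $v$, any orientation of $e$ is fine'' is false, and this is the heart of the whole section. Adding a zero-value item $e$ to $X_u$ does not change $f_i(X_u)$, but it \emph{strengthens} the EFX constraint against $u$: taking $g=e$ in ``$f_i(X_i)\ge f_i(X_u-\{g\})$ for all $g\in X_u$'' gives $f_i(X_i)\ge f_i(X_u)$, i.e.\ full envy-freeness, which may fail if $u$ holds a single edge that $i$ values. You conflate ``the value of the bundle is unchanged'' with ``the EFX condition is unchanged.'' The paper's proof avoids both problems by assigning the \emph{middle} edge $x_1x_2$ of a length-3 subdivision value $1$ to both of its endpoints and only the two outer edges value $0$: in any EFX orientation of $G'$ the two outer edges cannot both go inward (whichever of $x_1,x_2$ gets $x_1x_2$ would then violate EFX against the other), so at least one of $u,v$ is forced to receive its incident outer edge, and \emph{that} vertex is the safe head for $e$ when the orientation is pulled back to $G$ --- it was already carrying a zero-value item there, so no EFX constraint tightens. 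Iterating on the newly created zero-value outer edges handles general odd path lengths. To repair your write-up you would need to adopt some device of this kind that forces a path edge onto $u$ or $v$; the all-zero valuation cannot do it.
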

    \begin{proof}
        Let the endpoints of $e$ be $u$ and $v$. We first consider the case where we subdivide $e$ twice with vertices $x_1$ and $x_2$ to form $G'$. $ux_1$ and $vx_2$ will have zero weight for both endpoints, while $x_1x_2$ will have a weight of 1 both endpoints. We claim that $G'$ does not have an EFX orientation with this valuation, so suppose, for contradiction, that $G'$ actually does have such an orientation. 

        Note that if $ux_1$ and $vx_2$ are oriented towards $x_1$ and $x_2$ respectively, then this orientation cannot be EFX, as whichever vertex that receives $x_1x_2$ will be envied by the other vertex, so they can't receive any additional edges for the orientation to be EFX. Hence, at least one of $u$ and $v$ need to receive an edge created by subdividing $e$. Without loss of generality, suppose that in the EFX orientation, $v$ receives $vx_2$. 

        Returning to $G$, we can copy the EFX orientation on $G'$, directing $e$ towards $v$. Note that all vertices in $G-u$ have the same bundle as they had on $G'$, meaning that the allocation within $G-u$ is EFX. While it's possible that $u$ could've received $ux_1$, since $ux_1$ had zero value, removing it does not change the value of the bundle $u$ receives, hence $u$ values the bundles it receives on $G$ and $G'$ equally. Furthermore, since $uv$ has zero value, then $u$ will not value the bundle given to $v$, hence for any vertex on $G-u$, $u$ will not envy any strict subset of that vertex's bundle. Since the bundle $u$ receives at $G$ is a subset of what it received on $G'$, no vertex from $G-u$ envies a strict subset of the bundle $u$ receives either. Hence there exists an EFX orientation on $G$ with such a valuation function, which is a contradiction. 

        Note while subdividing $e$, we created edges $ux_1$ and $vx_2$ which also have zero value, meaning that we can repeat this process again by subdividing any one of those edges twice, hence replacing $e$ with an odd path of edges alternating between zero and one values. The new graph will not have an EFX orientation with this valuation, and hence is not strongly EFX-orientable, as desired. 
    \end{proof}

\begin{corollary}\label{cor:fs1}
    Odd cycles that share exactly one edge, odd cycles that share exactly one vertex, and disjoint odd cycles connected by a path are not strongly EFX-orientable.
\end{corollary}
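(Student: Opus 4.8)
The plan is to exhibit every member of the three families as a \emph{totally odd subdivision} of one of the small non-strongly-EFX-orientable graphs from Propositions \ref{prop:fs1} and \ref{prop:fs2}, and to propagate non-orientability along the subdivisions using Lemma \ref{lem:fs1}. The one point to set up carefully is that in each witnessing valuation the edges I want to ``stretch'' are precisely the zero-value (dashed) edges, which is exactly the hypothesis of Lemma \ref{lem:fs1}; moreover a zero-value edge that is left untouched stays zero-value in the subdivided graph, so Lemma \ref{lem:fs1} may be applied in turn to several edges. Each time a length-$1$ edge is replaced by an odd path of length $\ell$, a triangle through that edge becomes a cycle of length $\ell+2$, which is again odd, so triangles turn into odd cycles of every length $\ge 3$ as $\ell$ ranges over the positive odd integers.

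For two odd cycles meeting in exactly one edge I would start from the two triangles sharing the edge $v_2v_3$ in Proposition \ref{prop:fs2}: there the ``outer'' edges $v_1v_3$ and $v_2v_4$ have value $0$ while the shared edge has value $0.5$, so subdividing $v_1v_3$ and $v_2v_4$ by odd paths yields two odd cycles of arbitrary odd lengths still meeting only in $v_2v_3$. For two odd cycles meeting in exactly one vertex I start from two triangles glued at a vertex $v_3$ (Proposition \ref{prop:fs1}); in that bad $0$-$1$ valuation an edge of each triangle not incident to $v_3$ (say $v_1v_2$ in the first) and an edge of the second triangle is zero-valued, and subdividing such edges by odd paths gives two odd cycles of arbitrary odd lengths sharing only $v_3$ --- the new vertices lie on exactly one of the two cycles, so exactly one common vertex remains.

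For disjoint odd cycles joined by a path I would split on the parity of the joining path's length. For odd joining paths I start from two triangles joined by a single edge; here I need the slightly sharper fact that the witnessing $0$-$1$ valuation can be chosen so that the joining edge $a_3b_3$ is also zero-valued, which I would obtain by taking the two value-$1$ trees of Lemma \ref{lem:01char} to be the two-edge paths $a_1a_3a_2$ and $b_1b_3b_2$ formed by the triangle edges incident to the two joining vertices (so the zero-value edges are $a_1a_2$, $b_1b_2$, and $a_3b_3$), and checking --- exactly as in the proof of Proposition \ref{prop:fs1} --- that no admissible choice of roots exists, since any such choice forces $N_H(x_1)\cup N_H(x_2)=\{a_3,b_3\}$, which is not independent. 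Subdividing $a_3b_3$ by an odd path of length $\ell_0$ produces a joining path of length $\ell_0\in\{1,3,5,\dots\}$, while subdividing $a_1a_2$ and $b_1b_2$ independently sets the two cycle lengths. For even joining paths I instead start from the two triangles joined by a path of length $2$ in Proposition \ref{prop:fs1}, whose valuation makes the first joining edge $v_3v_4$ zero-valued; replacing $v_3v_4$ by an odd path of length $\ell_0$ leaves a joining path of length $\ell_0+1\in\{2,4,6,\dots\}$, and the zero-valued outer edges of the two triangles are subdivided to adjust the cycle lengths. Together the two sub-cases realize every pair of odd cycles joined by an internally disjoint path of any positive length.

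I expect the only real work beyond routine parity bookkeeping to be the two verifications that the base valuations have the intended zero-value edges and that, after subdivision, the graph is genuinely of the stated form (two odd cycles meeting in exactly one edge, or exactly one vertex, or not at all with an internally disjoint joining path). The first is a matter of reading off the dashed edges in Figures \ref{fig:test}, \ref{fig:1}, \ref{fig:3}, together with the small extra argument above that the joining edge in the two-triangles-by-an-edge graph can be made zero-valued; the second is immediate, since each subdivided edge lies on exactly one cycle (or on the joining path) and an edge subdivision never identifies two vertices.
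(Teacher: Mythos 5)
Your proposal is correct and, for two odd cycles sharing an edge, sharing a vertex, or joined by an even-length path, it is essentially the paper's argument: identify a zero-valued edge of each triangle (respectively of the connecting path) in the witnessing valuations of Propositions \ref{prop:fs1} and \ref{prop:fs2}, and stretch those edges into odd paths via repeated applications of Lemma \ref{lem:fs1}, observing as you do that the proof of that lemma preserves the zero values of the untouched edges so that it can be iterated. The one place you diverge is the odd-length connecting path: the paper handles that subcase through the 0-1 machinery instead, noting that two triangles joined by a single edge violate the matching condition of Corollary \ref{cor:01nec} (matching of size $3$, no induced independent set of size $3$) and invoking Lemma \ref{lem:subdiv} to propagate that violation under double subdivisions of \emph{any} edge, including the bridge. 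Your alternative --- building an explicit 0-1 valuation whose value-$1$ forest is the two stars $a_1a_3a_2$ and $b_1b_3b_2$ so that the bridge $a_3b_3$ is zero-valued, verifying non-orientability via the root argument of Lemma \ref{lem:01char}, and then applying Lemma \ref{lem:fs1} as in the other cases --- is also valid (your check that every choice of roots yields a non-independent union of neighborhoods is right; the only case you elide is $x_1=a_3$, where $N_H(a_3)=\{a_1,a_2\}$ already contains an edge), and it has the small advantage of making all three families go through a single mechanism. The paper's route is shorter for that subcase because Lemma \ref{lem:subdiv} subdivides arbitrary edges, whereas yours needs the extra verification that the bridge can be made zero-valued.
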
    
    \begin{proof}
        We first handle the case of odd cycles joined by a path of odd length. By Proposition \ref{prop:fs1}, two triangles connected by a path of length 1 violated the matching condition in Corollary \ref{cor:01nec}, and hence by Lemma \ref{lem:subdiv}, we can replace the any edge with an odd path of arbitrary length by repeated subdivisions, allowing us to generate odd cycles of any size connected by a path of any odd length, all of which are not strongly EFX orientable. 
        
        For the case of of odd cycles joined by a path of even length, or glued by one edge or vertex, note that by Proposition \ref{prop:fs1} and \ref{prop:fs2}, triangles joined by a path of length 2, and triangles glued by one edge or vertex are not strongly EFX orientable. Furthermore, the examples of bad valuation assignments in all cases had at least one edge in the odd cycle have zero value, meaning that we can replace the triangle with an odd cycle of any length and still have a non strongly EFX orientable graph by Lemma \ref{lem:fs1}. Similarly, in the case of two triangles being joined by a path of length 2, one of the edges in the path had zero value, allowing us to replace that edge with a path of odd length and giving us an even path of any length, as desired. 
    \end{proof}

Finally, the next lemma shows that the bipartite condition cannot be relaxed much, as adding a single edge to many bipartite graphs can break strong EFX-orientability. 

\begin{lemma}\label{lem:suf-tight}
    Let $G$ be a bipartite graph with $|V(G)| \geq 4$ that remains connected after the deletion of any vertex. Suppose that an edge $e$ is added, which connects two vertices of the same partite set. Then $G+e$ is not strongly EFX-orientable.  
\end{lemma}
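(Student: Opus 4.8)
The plan is to exploit the fact that $G$ is $2$-connected: ``$G-v$ connected for every $v$'' together with $|V(G)|\ge 4$ forces this, since a disconnected graph on at least three vertices always has a vertex whose deletion leaves it disconnected, and a cut vertex of $G$ would contradict the hypothesis directly. Write the bipartition of $G$ as $A\cup B$ and, without loss of generality, let $e=xy$ with $x,y\in A$; since $e$ is a \emph{newly added} edge we have $xy\notin E(G)$, so $x,y$ are non-adjacent in $G$. Applying Menger's theorem to the non-adjacent pair $x,y$ in the $2$-connected graph $G$ yields two internally vertex-disjoint $x$-$y$ paths $P$ and $Q$ in $G$. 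Because $G$ is bipartite and $x,y$ lie in the same part $A$, every $x$-$y$ path in $G$ has even length; in particular $|P|$ and $|Q|$ are even, hence at least $2$.

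Next I would isolate the forbidden structure. Set $O_1=P\cup\{e\}$ and $O_2=Q\cup\{e\}$. Since $|P|,|Q|$ are even, $O_1$ and $O_2$ are cycles of \emph{odd} length at least $3$, and since $P,Q$ are internally disjoint, $O_1$ and $O_2$ share exactly the edge $e$ (together with its two endpoints). Thus $H:=O_1\cup O_2\subseteq G+e$ is a graph consisting of two odd cycles sharing exactly one edge (possibly of different odd lengths), which by Corollary \ref{cor:fs1} is not strongly EFX-orientable. Moreover, inspecting the proofs of Proposition \ref{prop:fs2} and Lemma \ref{lem:fs1}, the witnessing valuation $g$ on $H$ admitting no EFX orientation can be taken to be additive with all single-edge values in $\{0,\tfrac12,1\}$.

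Finally I would transfer this obstruction to all of $G+e$. Extend $g$ to $G+e$ by assigning value $0$ to both endpoints of every edge not in $H$; this stays additive and monotone, and each such extra edge has \emph{zero value} (in the sense introduced before Lemma \ref{lem:fs1}) for both its endpoints. I claim that $G+e$ with this valuation has no EFX orientation, which proves the lemma. Suppose it did; restrict the orientation to the edges of $H$. For any vertex $u$ of $H$, additivity together with the zero values shows that $u$'s value for its restricted bundle equals its value for its full bundle, and likewise, for every other vertex $w$ of $H$ and every good $h$ in $w$'s restricted bundle, $u$'s value for $(w\text{'s full bundle})\setminus\{h\}$ equals its value for $(w\text{'s restricted bundle})\setminus\{h\}$ — deleting a newly added good from any bundle changes nobody's valuation. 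Hence the restricted orientation is an EFX orientation of $(H,g)$, contradicting the choice of $g$.

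The only delicate point is this last transfer step — that restricting an EFX orientation of $G+e$ to $H$ stays EFX on $H$. This is precisely where genuine zero value (rather than merely ``a $0$-$1$-valued edge'') matters: it guarantees the extra edges vanish from every agent's valuation even when such an edge happens to join two vertices that both lie in $H$, and it embodies the principle that adding worthless goods can never help satisfy EFX, only create further goods that must be removable. Everything else is a routine combination of Menger's theorem, the bipartite parity argument, and the already-established Corollary \ref{cor:fs1}.
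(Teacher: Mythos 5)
Your proof is correct and follows essentially the same route as the paper's: Menger's theorem gives a cycle through the two endpoints of $e$, bipartite parity makes both arcs even, so $e$ splits it into two odd cycles sharing exactly the edge $e$, and Corollary \ref{cor:fs1} finishes. The only difference is that you explicitly carry out the zero-value extension argument showing that a non-strongly-EFX-orientable subgraph propagates to the whole graph, a step the paper leaves implicit; this is a worthwhile piece of rigor but not a different approach.
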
    
\begin{proof}
        Let $u$ and $v$ be the vertices incident to $e$. By Menger's Theorem, since $G$ has no cut vertices, there exists a cycle on $G$ through $u$ and $v$. Since $G$ is bipartite, such a cycle must have an even number of vertices, and since $u$ and $v$ are part of the same partite set, then their distance on the cycle must be even. Adding $e$ onto the cycle breaks it into two odd cycles which share $e$ as an edge, which is not strongly EFX-orientable, hence $G$ is not strongly EFX-orientable. 
    \end{proof}

\section{Discussion}\label{sec:disc}
In this paper, we studied {\em strong EFX-orientability} for graphical valuations \cite{Christodoulou2023}. We showed a deep connection of this property to the chromatic number of the graph: every strongly EFX-orientable graph has a chromatic number of at most three, and a graph with a chromatic number of two or less is strongly EFX-orientable. This result is tight in the following sense where we demonstrate a 3-chromatic graph with and without this property. The upper bound of 3 on the chromatic number demonstrates that apart from a niche class of graphs with nice structure, most graphs are not strongly EFX orientable. Therefore, it is reasonable to expect that when forming an EFX allocation in this graphical setting, not all edges are allocated to an incident vertex that values them. 


We also demonstrated that 0-1 strong EFX orientability and general strong EFX orientability were separate notions, and it remains open as to whether there is a graph that is EFX orientable for all additive valuations but not EFX orientable for a monotone function, or if additive strong EFX-orientability and strong EFX-orientability are equivalent. When given a graph with the valuation functions, determining whether an EFX-orientation exists is NP-hard \cite{Christodoulou2023}, but it is currently unknown if determining whether a graph is strongly EFX-orientable or not is NP-hard. It would be interesting to characterize graphs which admit an EFX orientation that is also Pareto optimal (PO). Finally, we hope that our results will help with obtaining a complete characterization of strongly EFX-orientable graphs. 

\section*{Acknowledgements}
The first author would like to thank Elfarouk Harb and Daniel Schoepflin for a careful reading of this manuscript, and Pooja Kulkarni for thoughtful discussions.


\newpage

\printbibliography

\end{document}